\title{Proving security of BB84 under source correlations}
\author{Ashutosh Marwah\footnote{email: ashutosh.marwah@outlook.com} \ and Fr\'ed\'eric Dupuis}
\affil{\small{D\'epartement d'informatique et de recherche op\'erationnelle,\\ Universit\'e de Montr\'eal,\\ Montr\'eal QC, Canada}}
\date{\today}							
\begin{document}
\maketitle

\begin{abstract}
    Device imperfections and memory effects can result in undesired correlations among the states generated by a realistic quantum source. These correlations are called source correlations. Proving the security of quantum key distribution (QKD) protocols in the presence of these correlations has been a persistent challenge. We present a simple and general method to reduce the security proof of the BB84 protocol with source correlations to one with an almost perfect source, for which security can be proven using previously known techniques. For this purpose, we introduce a simple source test, which randomly tests the output of the QKD source and provides a bound on the source correlations. We then use the recently proven entropic triangle inequality for the smooth min-entropy~\cite{Marwah23} to carry out the reduction to the protocol with the almost perfect source. 
\end{abstract}

\section{Introduction}

Quantum information enables the development of cryptographic protocols that surpass classical counterparts, offering not only enhanced security but also capabilities which would be impossible classically \cite{Bennett84,Konig12,Broadbent20, Bartusek23}. Protocols in quantum cryptography often require an honest party to produce multiple independent quantum states. As an example, quantum key distribution (QKD) protocols \cite{Bennett84,Bennet92} and bit commitment protocols \cite{Konig12,Lutkenhaus20} all require the honest participant, Alice to produce an independently and randomly chosen quantum state from a set of states in every round of the protocol. The security proofs for these protocols also rely on the fact that the quantum state produced in each round of the protocol is independent of the other rounds. However, this is a difficult property to enforce practically. All physical devices have an internal memory, which is difficult to characterise and control. This memory can cause the quantum states produced in different rounds to be correlated with one another. For example, when implementing BB84 states using the polarisation of light, if the polariser is in the horizontal polarisation ($\ket{0}$) for round $k$, and it is switched to the $\Pi/4$-diagonal polarisation ($\ket{+}$) in the $(k+1)$th round, then it is plausible that the state produced in the $(k+1)$th round is ``tilted'' towards the horizontal (that is, has a larger component along $\ket{0}$ than $\ket{1}$) simply due to the inertia of switching the polariser. Such correlations between different rounds caused by an imperfect source are called \emph{source correlations}. Security proofs for cryptographic protocols need to consider such correlations in order to be relevant in the real world. \\

An extensive line of research has led to techniques for proving the security of QKD protocols with a perfect source \cite{Shor00,Renner06,Koashi09,Tomamichel17,Dupuis20,Metger22-2}. However, almost all of these techniques rely on \emph{source purification}\footnote{Only \cite{Metger22-2} does not use source purification, but it still requires that the states in each round be produced independently.}-- the fact that the security of this protocol is equivalent to one where Alice sends out one half of a Bell state in each round and randomly measures her half. When the states produced by Alice's source are correlated across different rounds, this equivalence step fails and one can no longer use these methods. In this paper, we use the entropic triangle inequality recently proven in \cite[Lemma 3.5]{Marwah23} to reduce the security of the BB84-QKD protocol with source correlations to that of the BB84 protocol with a perfect source. With this reduction, one can simply use one of the many security analysis methods developed to complete the security proof\footnote{
    Following the reduction, any security proof technique for QKD which can bound $\tilde{H}^{\uparrow}_{\alpha}$ of Alice's raw key given Eve's side information can be used to complete the proof. The assumptions for the security of the protocol will be a combination of the assumptions required for this security proof and the assumptions used during the source test presented in Protocol \ref{frame:src_corr_prot}.
}. We demonstrate our technique using the BB84 protocol, although we believe it is quite general and can be applied to other cryptographic protocols as well. \\

\begin{figure}
    \centering
    \includegraphics[scale=0.1]{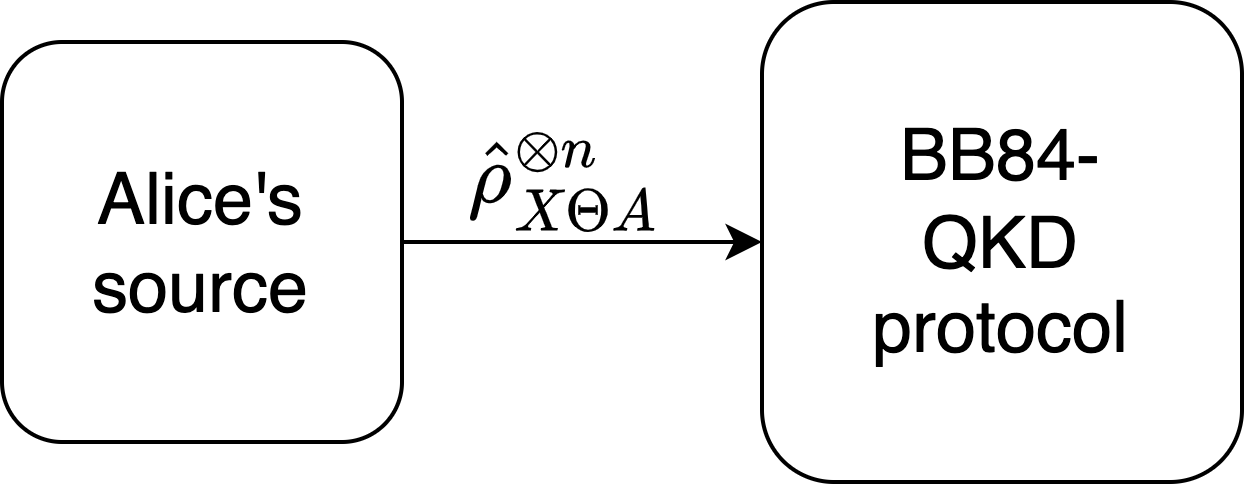}
    \caption{Quantum input for the BB84 protocol with a perfect source.}
    \label{fig:BB84-ideal}
\end{figure}

In the BB84 protocol, the only quantum state to the protocol is provided by Alice. The protocol can be represented as in Fig. \ref{fig:BB84-ideal}. If the source is imperfect and the BB84 protocol is directly performed on the state produced by such a source like in Fig. \ref{fig:BB84-ideal}, it is difficult to analyse the protocol and provide good security guarantees\footnote{
    The following example demonstrates the difficulty. Imagine a source which at the start of the QKD protocol flips a coin $C$, which is $0$ with probability $\epsilon_s$. If $C=1$, the source produces the qubit states perfectly, otherwise if $C=0$ it encodes 0 whenever a key generation basis is used. The state produced by this source will be $O(\epsilon_s)$ close to the perfect state in each round. It will also not abort during parameter estimation. However, with probability $\epsilon_s$ no key is produced between Alice and Bob. In this situation, we would like the protocol's secrecy error to remain arbitrarily small and its abort probability to be $\approx \epsilon_s$. For this we need to be able to somehow identify the $C=0$ bad case.
}. Instead, we propose and analyse the setup presented in Fig. \ref{fig:BB84-src-corr-prot}. Here the source is tested during the execution of the protocol using a simple procedure and the protocol run is only declared valid if the test passes. The source test randomly measures the output of the source on a small sample of the rounds in the preparation basis and aborts if the relative deviation of the observed output from the expected output is more than some small threshold $\epsilon$. Practically, this test can be carried out concurrently with the BB84 protocol and no quantum memory is required.\\

\begin{figure}
    \centering
    \includegraphics[scale=0.1]{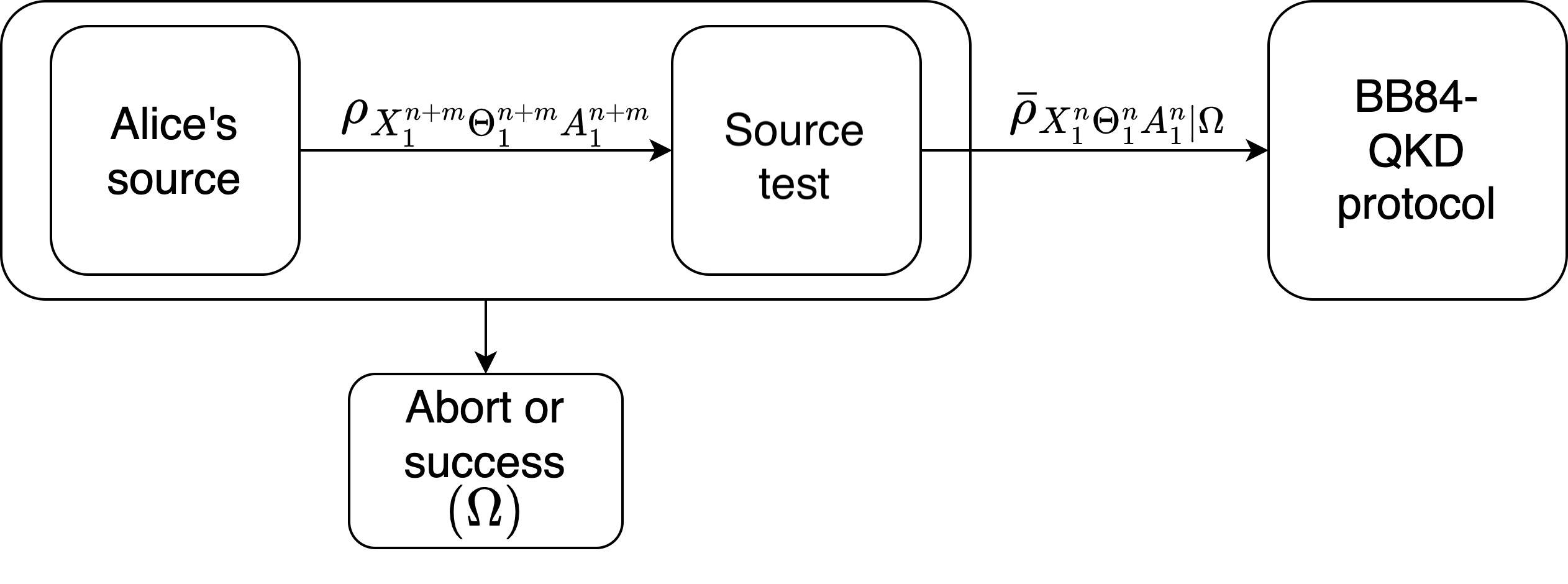}
    \caption{The setup for performing BB84 protocol with a source test.}
    \label{fig:BB84-src-corr-prot}
\end{figure}

For the security analysis of the protocol depicted in Fig. \ref{fig:BB84-src-corr-prot}, we do not need any assumptions on the source in addition to the BB84 protocol. Assumptions are only required on the measurements used for the source test. In Section \ref{sec:sec_proof_src_corr}, we present the security analysis assuming perfect measurements and then in Section \ref{sec:imp_meas}, we demonstrate how this analysis can be modified to incorporate imperfect measurements. It is worth noting that these measurements are used at a much smaller rate than the source, so it should be easier to implement them almost perfectly than it is to do the same for the source. \\

In comparison, \cite{Pereira22}, which provides a modified BB84 protocol for QKD in the presence of source imperfections and correlations, makes multiple complex assumptions about Alice's source (also see \cite{Lorenzo23}). Among these, it assumes that the state produced by Alice in the $k$th round can only be correlated to the states produced in the $\ell_c$ rounds preceding it, where $\ell_c$ is some known constant. Moreover, it also assumes that Alice's quantum states are not entangled across different rounds. These are both very strong assumptions, which cannot be guaranteed in practical setups. After our work appeared on Arxiv, a subsequent work \cite{Pereira24} improved the analysis in \cite{Pereira22}, enabling it to consider unbounded correlations. However, this analysis depends on a good model for the correlations and their decay. Given these \cite{Pereira24} shows that one can choose the parameter $\ell_c$ such that the protocol in \cite{Pereira22} leads to a secure QKD protocol with some additional error, which depends on how fast the source correlations decay\footnote{The additional security error for the protocol in \cite{Pereira24} is related to the sum of the correlations at distance $m \geq \ell_c$. Bounding this sum requires fast decay of correlations.}. Our work does not rely on any such model. In addition to reducing the assumptions required for QKD, this is also beneficial practically because the source parameters may change due to environmental factors, like temperature, as well as the age of the device. \\

The source test also addresses the challenge of characterising the source for QKD \cite{Mizutani19, Kang19, Huang23}. Most theoretical descriptions of QKD protocol require the source to operate almost perfectly. Thus, in order to implement these protocols one needs to characterise the source beforehand. Since we show security of BB84 as long as the source test succeeds, no prior characterisation is required for the source in our protocol. However, one still needs to characterise the measurements used in the source test. As mentioned above, this might be easier since the measurements are used at a much smaller rate. 

\section{Background and Notation}

For $n$ quantum registers $(X_1, X_2, \cdots, X_n)$, the notation $X_i^j$ refers to the set of registers $(X_i, X_{i+1}, \cdots, X_{j})$. We use the notation $[n]$ to denote the set $\{1,2, \cdots, n\}$. For a register $A$, $|A|$ represents the dimension of the underlying Hilbert space. If $X$ and $Y$ are Hermitian operators, then the operator inequality $X \geq Y$ denotes the fact that $X-Y$ is a positive semidefinite operator and $X>Y$ denotes that $X-Y$ is a strictly positive operator. A quantum state refers to a positive semidefinite operator with unit trace. At times, we will also need to consider positive semidefinite operators with trace less than equal to $1$. We call these operators subnormalised states. We will denote the set of registers a quantum state describes (equivalently, its Hilbert space) using a subscript. For example, a quantum state on the register $A$ and $B$, will be written as $\rho_{AB}$ and its partial states on registers $A$ and $B$, will be denoted as $\rho_{A}$ and $\rho_{B}$. The identity operator on register $A$ is denoted using $\Id_A$. A classical-quantum state on registers $X$ and $B$ is given by $\rho_{XB} = \sum_{x} p(x) \ket{x}\bra{x} \otimes \rho_{B|x}$, where $\rho_{B|x}$ are normalised quantum states on register $B$.\\

The term ``channel'' is used for completely positive trace preserving (CPTP) linear maps between two spaces of Hermitian operators. A channel $\cN$ mapping registers $A$ to $B$ will be denoted by $\cN_{A \rightarrow B}$. \\

The trace norm is defined as $\norm{X}_1 := \tr\big(\rndBrk{X^\dag X}^{\frac{1}{2}}\big)$. The fidelity between two positive operators $P$ and $Q$ is defined as $F(P,Q)= \norm{\sqrt{P}\sqrt{Q}}_1^2$. The generalised fidelity between two subnormalised states $\rho$ and $\sigma$ is defined as 
\begin{align}
    F_\ast(\rho, \sigma) := \rndBrk{\norm{\sqrt{\rho}\sqrt{\sigma}}_1 + \sqrt{(1- \tr\rho)(1- \tr\sigma)}}^2.
\end{align}
The purified distance between two subnormalised states $\rho$ and $\sigma$ is defined as 
\begin{align}
    P(\rho, \sigma) = \sqrt{1- F_{\ast}(\rho, \sigma)}.
\end{align}
Throughout this paper, we use base $2$ for both the functions $\log$ and $\exp$. We follow the notation in Tomamichel's book~\cite{TomamichelBook16} for R\'enyi entropies. The sandwiched $\alpha$-R\'enyi relative entropy for $\alpha \in [\frac{1}{2}, 1) \cup (1,\infty]$ between the positive operator $P$ and $Q$ is defined as 
\begin{align}
    \tilde{D}_{\alpha}(P || Q) = \begin{cases}
        \frac{1}{\alpha-1} \log \frac{\tr(Q^{-\frac{\alpha'}{2}} P Q^{-\frac{\alpha'}{2}})^{\alpha}}{\tr(P)} & \text{ if } (\alpha<1 \text{ and } P \not\perp Q) \text{ or } (P \ll Q)\\
        \infty & \text{ else}.
    \end{cases}
\end{align}
where $\alpha' = \frac{\alpha-1}{\alpha}$. In the limit $\alpha \rightarrow 
\infty$, the sandwiched divergence becomes equal to the max-relative entropy, $D_{\max}$, which is defined as 
\begin{align}
    D_{\max} (P|| Q) := \inf \curlyBrk{\lambda \in \mathbb{R}: P \leq 2^{\lambda}Q}.
\end{align}
In the limit of $\alpha \rightarrow 1$, the sandwiched relative entropy equals the quantum relative entropy, $D(P ||Q)$, which is defined as 
\begin{align}
    D(P|| Q) := \begin{cases}
        \frac{\tr\rndBrk{P \log P - P \log Q}}{\tr(P)} & \text{ if } P \ll Q\\
        \infty & \text{ else}.
    \end{cases}
\end{align}
We can use the sandwiched divergence to define the following conditional entropies for the subnormalised state $\rho_{AB}$:
\begin{align}
    \tilde{H}_{\alpha}^{\uparrow} (A|B)_{\rho} &:= \sup_{\sigma_B} - \tilde{D}_{\alpha}(\rho_{AB} || \Id_A \otimes \sigma_B) \\
    \tilde{H}_{\alpha}^{\downarrow} (A|B)_{\rho} &:= - \tilde{D}_{\alpha}(\rho_{AB} || \Id_A \otimes \rho_B)
\end{align}
again for $\alpha \in [\frac{1}{2}, 1) \cup (1,\infty]$. The supremum in the definition for $ \tilde{H}_{\alpha}^{\uparrow}$ is over all quantum states $\sigma_B$ on register $B$. \\

For $\alpha \rightarrow 1$, both these conditional entropies are equal to the von Neumann conditional entropy $H(A|B)$. $\tilde{H}_{\infty}^{\uparrow} (A|B)_{\rho}$ is usually called the min-entropy. The min-entropy is usually denoted as $H_{\min}(A|B)_{\rho}$ and for a subnormalised state can also be defined as 
\begin{align}
    H_{\min}(A|B)_{\rho} &:= \sup \curlyBrk{\lambda \in \mathbb{R}: \text{ there exists state }\sigma_B \text{ such that } \rho_{AB} \leq 2^{-\lambda} \Id_A \otimes \sigma_B}.
\end{align}  
For the purpose of smoothing, define the $\epsilon$-ball around the subnormalised state $\rho$ as the set
\begin{align}
    B_{\epsilon}(\rho) = \{ \tilde{\rho} \geq 0 : P(\rho, \tilde{\rho}) \leq \epsilon \text{ and } \tr\tilde{\rho} \leq 1\}.
\end{align}
We define the smooth max-relative entropy as 
\begin{align}
    D_{\max}^{\epsilon}(\rho || \sigma) = \min_{\tilde{\rho} \in B_{\epsilon}(\rho)} D_{\max}(\tilde{\rho} || \sigma)
\end{align}
The smooth min-entropy of $\rho_{AB}$ is defined as 
\begin{align}
    H_{\min}^{\epsilon}(A|B)_{\rho} = \max_{\tilde{\rho} \in B_{\epsilon}(\rho)} H_{\min}(A|B)_{\tilde{\rho}}.
\end{align} 

\section{Key lemmas}

In this section, we describe the two key results our analysis relies on. As described in the introduction, we use the entropic triangle inequality proven in \cite[Lemma 3.5]{Marwah23} to transform the problem of lower bounding $H^{\epsilon}_{\min}$ of the QKD protocol performed using the real correlated source to the problem of lower bounding $\tilde{H}^{\uparrow}_{\alpha}$ of a QKD protocol using perfect states. This inequality is stated in the following Lemma.  
\begin{lemma}[Entropic triangle inequality {\cite[Lemma 3.5]{Marwah23}}]
  For $\alpha \in (1,2]$, $\epsilon \in [0,1)$, and $\delta \in (0,1)$ such that $\epsilon + \delta < 1$ and two states $\rho$ and $\eta$, we have 
  \begin{align}
      H_{\min}^{\epsilon+\delta}(A|B)_{\rho} &\geq \tilde{H}^{\uparrow}_{\alpha}(A|B)_{\eta} - \frac{\alpha}{\alpha-1} D^\epsilon_{\max} (\rho_{AB}|| \eta_{AB}) - \frac{g_1(\delta, \epsilon)}{\alpha-1}
      \label{eq:Hmin_rho_to_Halpha_sigma_using_Dmax}
  \end{align}
  where $g_1(x, y):= - \log(1- \sqrt{1-x^2}) - \log (1-y^2)$. 
  \label{lemm:Hmin_rho_to_Halpha_sigma_using_Dmax}
\end{lemma}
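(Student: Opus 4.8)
The plan is to derive the inequality by chaining together three known facts: the behaviour of the smooth min-entropy under small perturbations in purified distance, a relation between $H_{\min}^{\epsilon}$ and $\tilde H^{\uparrow}_{\alpha}$, and a $D_{\max}$-based ``change of reference state'' bound. Concretely, I would first pick the optimal $\tilde\rho \in B_{\epsilon}(\rho)$ achieving $D_{\max}^{\epsilon}(\rho_{AB}\|\eta_{AB})$, so that $\tilde\rho_{AB} \leq 2^{\lambda}\eta_{AB}$ with $\lambda = D_{\max}^{\epsilon}(\rho_{AB}\|\eta_{AB})$. The goal then becomes to lower bound $H_{\min}^{\epsilon+\delta}(A|B)_{\rho}$ in terms of $H_{\min}$-like quantities of $\tilde\rho$ and $\eta$.

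The key intermediate step is a lemma of the form: for $\alpha \in (1,2]$ and $\delta \in (0,1)$, any subnormalised state $\tau_{AB}$ satisfies
\begin{align}
    H_{\min}^{\delta}(A|B)_{\tau} \geq \tilde H^{\uparrow}_{\alpha}(A|B)_{\tau} - \frac{g_1(\delta,0)}{\alpha-1},
\end{align}
or some close variant — this is the standard ``smooth min-entropy is lower bounded by sandwiched Rényi entropy of order close to $1$'' estimate (a quantitative version of the asymptotic equipartition property lower bound), and I would expect it is already available in \cite{Marwah23} or \cite{TomamichelBook16} in a form that can be cited. Applying this to $\tau = \eta$ gives $H_{\min}^{\delta}(A|B)_{\eta} \geq \tilde H^{\uparrow}_{\alpha}(A|B)_{\eta} - \frac{g_1(\delta,\cdot)}{\alpha-1}$. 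Separately, from $\tilde\rho_{AB} \leq 2^{\lambda}\eta_{AB}$ and the operator-monotonicity/definition of $H_{\min}$, I get that any $\sigma_B$ witnessing $\eta_{AB} \leq 2^{-H_{\min}(A|B)_\eta}\Id_A\otimes\sigma_B$ also witnesses $\tilde\rho_{AB} \leq 2^{\lambda - H_{\min}(A|B)_\eta}\Id_A\otimes\sigma_B$, hence $H_{\min}(A|B)_{\tilde\rho} \geq H_{\min}(A|B)_\eta - \lambda$; smoothing on both sides and tracking the purified-distance budget ($\rho$ to $\tilde\rho$ costs $\epsilon$, the internal smoothing of $\eta$ costs $\delta$) yields $H_{\min}^{\epsilon+\delta}(A|B)_{\rho} \geq H_{\min}^{\delta}(A|B)_{\eta} - \lambda$. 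The factor $\frac{\alpha}{\alpha-1}$ on the $D_{\max}$ term (rather than just $1$) suggests the actual argument in \cite{Marwah23} routes through $\tilde D_\alpha$ rather than $D_{\max}$ directly — i.e.\ one bounds $\tilde D_\alpha(\rho_{AB}\|\Id_A\otimes\sigma_B)$ by $\tilde D_\alpha(\eta_{AB}\|\Id_A\otimes\sigma_B)$ plus $\frac{\alpha}{\alpha-1}D_{\max}(\tilde\rho_{AB}\|\eta_{AB})$ using a Rényi triangle-type inequality — so I would instead carry out that chain at the level of $\tilde D_\alpha$ and only convert to $H_{\min}^{\epsilon+\delta}$ at the very end.

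The main obstacle I anticipate is bookkeeping the smoothing radii and the normalisation/subnormalisation subtleties simultaneously with the Rényi-to-min-entropy conversion: one must ensure the perturbation from $\rho$ to $\tilde\rho$ (radius $\epsilon$) and the perturbation used to convert $\tilde D_\alpha$ of $\eta$ into a smooth min-entropy statement (radius $\delta$) combine to exactly $\epsilon+\delta$ by the triangle inequality for purified distance, and that the constant collected is precisely $g_1(\delta,\epsilon) = -\log(1-\sqrt{1-\delta^2}) - \log(1-\epsilon^2)$ and not something weaker — the $-\log(1-\epsilon^2)$ piece presumably comes from a gentle-measurement / trace-correction term when passing through the $\epsilon$-ball. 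Since the statement is quoted verbatim from \cite[Lemma 3.5]{Marwah23}, the honest approach here is to reference that proof; the sketch above is the route I would reconstruct it by, with the Rényi triangle inequality for $\tilde D_\alpha$ being the technical heart.
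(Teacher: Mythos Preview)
The paper does not prove this lemma at all: it is imported verbatim from \cite[Lemma~3.5]{Marwah23} and simply stated in the ``Key lemmas'' section without proof. So there is no in-paper argument to compare your proposal against. You yourself observe this at the end of your sketch, and that observation is the correct takeaway here --- the right move in this paper is to cite the external reference, which is exactly what the authors do.

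For what it is worth, your reconstruction of the likely proof route in \cite{Marwah23} --- passing through a R\'enyi triangle-type inequality at the level of $\tilde D_\alpha$ (which accounts for the $\frac{\alpha}{\alpha-1}$ prefactor on the $D_{\max}^{\epsilon}$ term), then converting to smooth min-entropy and tracking the two smoothing radii $\epsilon$ and $\delta$ via the purified-distance triangle inequality --- is plausible and matches the structure of the stated bound, but verifying the exact constants in $g_1$ would require consulting \cite{Marwah23} directly.
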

In order to use the triangle inequality above effectively, we also need a way to bound the $D^{\epsilon}_{\max}$ between the output of the source test and the almost perfect source state. We will use results from \cite{Bouman10} for this task. \cite{Bouman10} studies how sampling techniques can be used to estimate the relative ``weight'' of a string classically as well as quantumly. In particular, it essentially generalises the Hoeffding-Serfling random sampling bounds to quantum states. The main result of this paper has been summarised as the Theorem below. To state it, we first need to define the relative weight of a string. For an alphabet $\mathcal{X}$ and a string $x_1^n \in \mathcal{X}^n$, the relative weight of $x_1^n$ is the frequency of non-zero $x_i$, that is, 
\begin{align}
    \omega(x_1^n) := \frac{1}{n}|\{ i \in [n]: x_i \neq 0 \}|.
\end{align}
Further, for a string $x_1^n$ and a subset $S \subseteq [n]$, let $x_S$ refer to the string $(x_i)_{i \in S}$.

\begin{theorem}[Quantum sampling \cite{Bouman10}]
    \label{th:qu_sampling}
    Let $\Psi$ be a sampling strategy which takes a string $a_1^n$, selects a random subset $\Gamma \subseteq [n]$ with probability $p_{\Gamma}$, a random seed $K$ with probability $p_K$ and produces an estimate $f(\Gamma, a_{\Gamma}, K)$ for the relative weight of the rest of the string $a_{\bar{\Gamma}}$. We can define the set of strings for which this strategy provides a $\delta$-correct estimate for $\delta>0$ given the choices $\Gamma = \gamma$ and $K= \kappa$ as 
    \begin{align}
        B^{\delta}_{\gamma \kappa}(\Psi) := \curlyBrk{a_1^n : |\omega(a_{\bar{\gamma}}) - f(\gamma, a_\gamma, \kappa)| < \delta}
    \end{align}
    where $\bar{\gamma}$ is the complement of the set $\gamma$ in $[n]$. The classical maximum error probability for this strategy $\Psi$ is defined as
    \begin{align}
        \epsilon^{\delta}_{\text{cl}} := \max_{a_1^n} \Pr_{\Gamma K}[a_1^n \not\in B^{\delta}_{\Gamma K}(\Psi)].
    \end{align}
    Define the projectors $\Pi_{A_1^n}^{\delta| \gamma \kappa}:= \sum_{a_1^n \in B^{\delta}_{\gamma \kappa}(\Psi)} \ket{a_1^n}\bra{a_1^n}_{A_1^n}$. Then, for a quantum state $\rho_{A_1^n E}$, we have that the state 
    \begin{align}
        \tilde{\rho}_{\Gamma K A_1^n E} := \sum_{\gamma \kappa} p(\gamma \kappa) \ket{\gamma \kappa}\bra{\gamma \kappa}_{\Gamma K} \otimes \frac{\Pi_{A_1^n}^{\delta| \gamma \kappa} \rho_{A_1^n E} \Pi_{A_1^n}^{\delta| \gamma \kappa}}{\tr\rndBrk{\Pi_{A_1^n}^{\delta| \gamma \kappa} \rho_{A_1^n E}}}
    \end{align}
    is $\epsilon^{\delta}_{\text{qu}} = \sqrt{\epsilon^{\delta}_{\text{cl}}}$ close to the state $\rho_{\Gamma K A_1^n E} := \sum_{\gamma \kappa} p(\gamma \kappa) \ket{\gamma \kappa}\bra{\gamma \kappa}_{\Gamma K} \otimes \rho_{A_1^n E}$ in trace distance. 
\end{theorem}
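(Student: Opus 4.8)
The plan is to reduce everything to pure states, rewrite the trace distance between the two classical--quantum states as a $p(\gamma\kappa)$-weighted average of pure-state fidelities, and then observe that the average of the ``good-set'' projectors $\Pi_{A_1^n}^{\delta|\gamma\kappa}$ is diagonal in the string basis with eigenvalues equal to the classical success probabilities of $\Psi$, each of which is at least $1-\epsilon^\delta_{\mathrm{cl}}$ by hypothesis.

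First I would reduce to the case that $\rho_{A_1^n E}$ is pure. Fix a purification $\ket{\phi}_{A_1^n E R}$ of $\rho_{A_1^n E}$. Since each $\Pi_{A_1^n}^{\delta|\gamma\kappa}$ acts only on $A_1^n$ and is a projector, the normalised projection $\Pi_{A_1^n}^{\delta|\gamma\kappa}\ket{\phi}/\norm{\Pi_{A_1^n}^{\delta|\gamma\kappa}\ket{\phi}}$, reduced to $A_1^nE$, equals exactly $\Pi_{A_1^n}^{\delta|\gamma\kappa}\rho_{A_1^n E}\Pi_{A_1^n}^{\delta|\gamma\kappa}/\tr(\Pi_{A_1^n}^{\delta|\gamma\kappa}\rho_{A_1^n E})$; hence $\tilde{\rho}_{\Gamma K A_1^n E}$ and $\rho_{\Gamma K A_1^n E}$ are the partial traces over $R$ of the analogous states built from $\ket{\phi}$. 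By monotonicity of trace distance under partial trace it therefore suffices to prove the claim with $E$ replaced by $ER$ and $\rho_{A_1^n E}$ pure; I keep writing $\ket{\phi}$ for this state and set $p_{\gamma\kappa}:=\bra{\phi}\Pi_{A_1^n}^{\delta|\gamma\kappa}\ket{\phi}=\norm{\Pi_{A_1^n}^{\delta|\gamma\kappa}\ket{\phi}}^2$.

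Next, because $\rho_{\Gamma K A_1^nE}$ and $\tilde{\rho}_{\Gamma K A_1^nE}$ are both classical on $\Gamma K$ with the same distribution $p(\gamma\kappa)$, the trace distance decomposes as $\tfrac12\norm{\tilde{\rho}-\rho}_1=\sum_{\gamma\kappa}p(\gamma\kappa)\,\tfrac12\norm{\ket{\phi_{\gamma\kappa}}\bra{\phi_{\gamma\kappa}}-\ket{\phi}\bra{\phi}}_1$, where $\ket{\phi_{\gamma\kappa}}:=\Pi_{A_1^n}^{\delta|\gamma\kappa}\ket{\phi}/\sqrt{p_{\gamma\kappa}}$ (and, on the zero-weight blocks where $p_{\gamma\kappa}=0$, any fixed state, contributing at most $p(\gamma\kappa)$ to the sum). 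Using the exact pure-state identity $\tfrac12\norm{\ket{\psi}\bra{\psi}-\ket{\phi}\bra{\phi}}_1=\sqrt{1-|\braket{\psi|\phi}|^2}$ together with $|\braket{\phi_{\gamma\kappa}|\phi}|^2=p_{\gamma\kappa}$ gives $\tfrac12\norm{\tilde{\rho}-\rho}_1\le\sum_{\gamma\kappa}p(\gamma\kappa)\sqrt{1-p_{\gamma\kappa}}$, and Jensen's inequality (concavity of $\sqrt{\cdot}$) yields $\tfrac12\norm{\tilde{\rho}-\rho}_1\le\sqrt{1-\bra{\phi}M\ket{\phi}}$ with $M:=\sum_{\gamma\kappa}p(\gamma\kappa)\,\Pi_{A_1^n}^{\delta|\gamma\kappa}$.

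The final step lower bounds $\bra{\phi}M\ket{\phi}$. Each $\Pi_{A_1^n}^{\delta|\gamma\kappa}=\sum_{a_1^n\in B^{\delta}_{\gamma\kappa}(\Psi)}\ket{a_1^n}\bra{a_1^n}$ is diagonal in the string basis, so $M=\sum_{a_1^n}\Pr_{\Gamma K}[a_1^n\in B^{\delta}_{\Gamma K}(\Psi)]\,\ket{a_1^n}\bra{a_1^n}$; by the definition of $\epsilon^\delta_{\mathrm{cl}}$ every eigenvalue obeys $\Pr_{\Gamma K}[a_1^n\in B^{\delta}_{\Gamma K}(\Psi)]\ge 1-\epsilon^\delta_{\mathrm{cl}}$, hence $M\ge(1-\epsilon^\delta_{\mathrm{cl}})\Id$ and $\bra{\phi}M\ket{\phi}\ge 1-\epsilon^\delta_{\mathrm{cl}}$. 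Combining gives $\tfrac12\norm{\tilde{\rho}-\rho}_1\le\sqrt{\epsilon^\delta_{\mathrm{cl}}}=\epsilon^\delta_{\mathrm{qu}}$. The only mildly delicate points are the purification reduction (checking that the per-block renormalisation commutes with tracing out $R$) and the bookkeeping for degenerate blocks with $p_{\gamma\kappa}=0$; beyond that the argument is just the chain ``cq-splitting $\to$ pure-state fidelity $\to$ Jensen $\to$ diagonalisation'', and I expect no serious obstacle, this being in essence the proof of \cite{Bouman10}.
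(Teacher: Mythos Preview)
The paper does not give its own proof of this theorem; it is stated as a quotation of the main result of \cite{Bouman10} and used as a black box. Your argument is correct and is precisely the Bouman--Fehr proof: purify, split the cq trace distance over $(\gamma,\kappa)$, convert each block to a pure-state fidelity $\sqrt{1-p_{\gamma\kappa}}$, apply Jensen, and observe that the averaged projector $M$ is diagonal with eigenvalues $\Pr_{\Gamma K}[a_1^n\in B^{\delta}_{\Gamma K}]\ge 1-\epsilon^{\delta}_{\mathrm{cl}}$. There is nothing to compare against in the present paper, and no gap in your reasoning.
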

If one were to measure the string in the register given by $A_\gamma$ of the state $\tilde{\rho}$ defined above, then the rest of the registers $A_{\bar{\gamma}}$ of $\tilde{\rho}$ would lie in a subspace, which has relative weight $\delta$-close to $f(\gamma, a_\gamma, \kappa)$.

\section{Security proof for BB84 with source correlations}
\label{sec:sec_proof_src_corr}

We consider the BB84-QKD protocol described in Protocol \ref{frame:BB84}. In Table \ref{tab:var_defn}, we list all the variables we use for our proof along with their definitions.
\begin{figure}
    \begin{mdframed}
    \textbf{BB84 QKD protocol}\\

    \textbf{Parameters:}
    \begin{itemize}
        \item $n$ is the number of qubits sent by Alice.
        \item $\mu \in (0,1)$ is the probability of both encoding and measuring in the $X$ basis. 
        \item $e \in (0, \frac{1}{2})$ is the maximum error tolerated.
        \item $r \in (0,1)$ is the key rate of the protocol. 
    \end{itemize}

    \textbf{Protocol:}
    \begin{enumerate}
        \item For every $1 \leq i \leq n$ perform the following steps:
        \begin{enumerate}
            \item Alice chooses a random bit $X_i \in_R \{0,1\}$ and with probability $1-\mu$ encodes it in the $Z$ basis (as $\ket{0}$ or $\ket{1}$) and with probability $\mu$ in the $X$ basis (as $\ket{+}$ or $\ket{-}$). 
            \item Alice sends her encoded qubit to Bob. 
            \item Bob measures the qubit in the $Z$ basis with probability $1- \mu$ and in the $X$ basis with probability $\mu$. He records the output as $Y_i$.
        \end{enumerate}
        \item \textbf{Sifting:} Alice and Bob share their choice of bases for all the rounds and discard the rounds where their choices are different. We denote the remaining rounds by the set $S$. 
        \item \textbf{Information reconciliation:} Alice and Bob use an information reconciliation procedure, which lets Bob obtain a guess $\hat{X}_S$ for Alice's raw key $X_S$. 
        \item \textbf{Raw key validation:} Alice selects a random hash function from a two universal family and sends it along with $\text{hash}(X_S)$ to Bob. If $\text{hash}(X_S) \neq \text{hash}(\hat{X}_S)$ Bob aborts the protocol. 
        \item \textbf{Parameter estimation:} Let $S_X$ be the set of rounds where Alice prepared the qubit in the $X$ basis and Bob measured the qubit in $X$ basis. Bob aborts if ${\vert\{ i \in S_X : \hat{X}_i \neq Y_i\}\vert} > e \mu^2 n$.
        \item \textbf{Privacy Amplification:} Alice chooses a random function $F$ from a set of two universal hash functions from $|S|$ bits to $\lfloor rn \rfloor$ bits and announces it Bob. Alice and Bob compute the final key as $F(X_S)$ and $F(\hat{X}_S)$ respectively. 
    \end{enumerate}
    \end{mdframed}
    {\captionof{Protocol}{}
    \label{frame:BB84}}
\end{figure}
\begin{center}
    \begin{table}
    \begin{tabular}{ |l|p{0.8\textwidth}| } 
        \hline
        Variable & Definition  \\
        \hline \hline
        $\mathcal{X}$ & The set $\{0,1\}$; alphabet for Alice's random string.\\ \hline
        ${\Theta}$ & The set $\{0,1\}$; alphabet for the basis string.\\ \hline
        $X_1^n$  & The random string chosen by Alice at the beginning of the protocol. \\ \hline
        $\Theta_1^n$ & Alice's choice of randomly chosen basis. $\Theta_i =0 $ if Alice chooses $Z$ basis and $\Theta_i =1$ if she chooses $X$ basis. \\ \hline
        $A_1^n$ & The quantum registers sent by Alice to Bob. \\ \hline
        $\hat{\Theta}_1^n$ & Bob's choice of randomly chosen basis. $\hat{\Theta}_i =0 $ if Bob chooses $Z$ basis and $\hat{\Theta}_i =1$ if he chooses $X$ basis. \\ \hline
        $Y_1^n$ & Bob's outcomes of measuring $A_1^n$ in $\hat{\Theta}_1^n$ basis.\\ \hline
        $S$ & The set $\{i \in [n]: \hat{\Theta}_i = \Theta_i \}$. \\ \hline
        $\hat{X}_S$ & Bob's guess of $X_S$, produced at the end of the information reconciliation step.\\ \hline
        $T$ & Transcript for information reconciliation and raw key validation.\\ \hline
        $\bar{X}_1^n$ & For $i \in [n]$, $\bar{X}_i = X_i$ if $\Theta_i = \hat{\Theta}_i$ else $\bar{X}_i = \perp$. \\ \hline
        $\bar{Y}_1^n$ & For $i \in [n]$, $\bar{Y}_i = Y_i$ if $\Theta_i = \hat{\Theta}_i =1$ else $\bar{Y}_i = \perp$. \\ \hline
        $C_1^n$ & For $i \in [n]$, $C_i = X_i \oplus Y_i $ if $\Theta_i = \hat{\Theta}_i =1$ else $C_i = \perp$. \\ \hline
        $\hat{C}_1^n$ & For $i \in [n]$, $\hat{C}_i = \hat{X}_i \oplus Y_i $ if $\Theta_i = \hat{\Theta}_i =1$ else $\hat{C}_i = \perp$. \\ \hline
        $E$ & Eve's register created after Eve processes and forwards the states $A_1^n$ to Bob. \\ \hline
        $\Upsilon$ & The event that the protocol does not abort, i.e., ${\vert\{ i \in S_X : \hat{C}_i =1 \}\vert} \leq e \mu^2 n$ and $\text{hash}(X_S) = \text{hash}(\hat{X}_S)$.\\ \hline
        $\Upsilon'$ & The event that $X_S = \hat{X}_S$. \\ \hline
        $\Upsilon''$ & The event that ${\vert\{ i \in S_X : C_i =1 \}\vert} \leq e \mu^2 n$.\\
        \hline
    \end{tabular}
    \captionof{table}{Definition of variables for QKD \label{tab:var_defn}}
    \end{table}
\end{center}
At the beginning of every round of the QKD protocol, Alice prepares the classical registers $X_i$ and $\Theta_i$, and the corresponding qubit in the register $A_i$. If Alice's quantum source were perfect, she would produce the following state during each round of the protocol
\begin{align}
    \hat{\rho}_{X \Theta A} := \sum_{x \in \mathcal{X}, \theta \in \Theta} p(x, \theta) \ket{x, \theta}\bra{x, \theta}_{X \Theta} \otimes H^{\theta} \ket{x}\bra{x}_{A} H^{\theta}
\end{align}
where $H$ is the Hadamard gate and 
\begin{align}
    p(x, \theta) = \begin{cases}
        \frac{1- \mu}{|\mathcal{X}|} & \text{if } \theta = 0 \\
        \frac{\mu}{|\mathcal{X}|} & \text{if } \theta = 1.
    \end{cases}
\end{align}
Consider the case, where Alice only has access to an imperfect quantum source to prepare qubits for the QKD protocol above. We will assume here that the classical randomness used by Alice is perfect. We do not place any assumptions on the performance of the source. Suppose Alice and Bob use $n$ rounds for the BB84 protocol. In order, to perform the QKD protocol with the imperfect source, we require that Alice uses her source to first perform the source test given in Protocol \ref{frame:src_corr_prot} with $(n+m)$-total rounds. This test randomly selects $m$ rounds of the source output, measures the qubit $A_i$ in the basis given by $\Theta_i$ and compares the result with the encoded bit $X_i$ for these rounds. The source passes the test if the fraction of errors is less than $\epsilon$, which is a source error threshold chosen by Alice. Subsequently, Alice uses the $n$ remaining rounds produced by the source for the BB84 protocol provided the source test does not abort. The complete protocol is depicted in Fig. \ref{fig:BB84-src-corr-prot}. It should be noted that Alice can actually run Protocol \ref{frame:src_corr_prot} concurrently with the BB84 protocol. She does not need to create all the $(n+m)$-rounds at once and store them in a memory in order to carry out this protocol. She can classically sample a random set $\Gamma$ of size $m$ at the start of the BB84 protocol and for every round $i$, she can use the round as source test round if $i \in \Gamma$ or forward the state produced to Bob if $i \not\in \Gamma$. For theoretical purposes, this concurrent approach is equivalent to one where Alice begins by using her source to produce all the $(n+m)$-rounds and for our arguments we assume this is the case. In this section, we assume that the measurements used in the source test are perfect; we will then lift this assumption in Section~\ref{sec:imp_meas}. \\

\begin{figure}
    \begin{mdframed}
    \textbf{Source test}\\

    \textbf{Parameters:}
    \begin{itemize}
        \item $\epsilon$ is the source error tolerated.
        \item $m$ is the number of rounds on which the source is tested.
        \item $n$ is the number of rounds produced by the source for use in the subsequent protocol.
    \end{itemize}

    \textbf{Protocol:}
    \begin{enumerate}
        \item The source produces the state $\rho_{X_1^{n+m} \Theta_1^{n+m} A_1^{n+m}}$. 
        \item Choose a random subset $\Gamma \subseteq [n+m]$ of size $m$, measure the quantum registers $A_i$ in the basis given by $\Theta_i$ and let the result be $\hat{X}_i$. 
        \item Abort the protocol (and any subsequent protocols) if the observed error ${\frac{1}{m} |\{k \in \Gamma: \hat{X}_k \neq X_k \}| > \epsilon}$.
        \item Relabel the remaining registers from $1$ to $n$ and use them as the $n$ registers for the subsequent protocol.  
    \end{enumerate}
    \end{mdframed}
    {\captionof{Protocol}{}
    \label{frame:src_corr_prot}}
\end{figure}

Let $\rho_{X_1^{n+m} \Theta_1^{n+m} A_1^{n+m}}$ be the state produced by the imperfect source, $\Omega$ denote the event that the source test (Protocol \ref{frame:src_corr_prot}) does not abort and let the output of the source test protocol conditioned on $\Omega$ be the state $\bar{\rho}_{X_1^{n} \Theta_1^{n} A_1^{n}|\Omega}$ (or the subnormalised state $\bar{\rho}_{X_1^{n} \Theta_1^{n} A_1^{n}\wedge\Omega}$ depending on the context).\\

In the following Lemma, we prove that $\bar{\rho}_{X_1^{n} \Theta_1^{n} A_1^{n}|\Omega}$ has a relatively small smooth max-relative entropy with a depolarised version of the perfect source using Theorem \ref{th:qu_sampling}. We then use the entropic triangle inequality to reduce the security of the BB84 protocol with the imperfect source to that of a BB84 protocol, which uses this state as its source state.

\begin{lemma}
    \label{lemm:smoothDmax_bd_perf_meas}
    Let $\epsilon$ be the threshold of the source test, $\delta \in (0,1)$ a small parameter, and let $\hat{\rho}_{X \Theta A}^{(\epsilon +\delta)} := (1 - 2(\epsilon + \delta)) \hat{\rho}_{X \Theta A} + 2(\epsilon + \delta) \hat{\rho}_{X \Theta}\otimes \tau_A$ where $\tau_A$ is the completely mixed state on the register $A$. For the state $\bar{\rho}_{X_1^{n} \Theta_1^{n} A_1^{n}|\Omega}$ produced by the source test conditioned on passing, we have that
    \begin{align}
        D_{\max}^{\epsilon_f} (\bar{\rho}_{X_1^{n} \Theta_1^{n} A_1^{n} | \Omega} || \rndBrk{\hat{\rho}_{X \Theta A}^{(\epsilon +\delta)}}^{\otimes n})  \leq n h(\epsilon + \delta) + \log \frac{1}{\Pr_{{\rho}} (\Omega) - \epsilon_{\text{qu}}^{\delta}}
        \label{eq:src_corr_smooth_Dmax_bd}
    \end{align}
    where $\Pr_{\rho}(\Omega)$ is the probability of the event $\Omega$ when the testing procedure is applied to the state $\rho$, $h(x) = - x \log(x) - (1-x)\log (1-x)$ is the binary entropy function and $\epsilon_f = 2 \sqrt{\frac{\epsilon_{\text{qu}}^{\delta}}{\Pr_\rho (\Omega)}}$ for $\epsilon_{\text{qu}}^{\delta} = \sqrt{2}\exp\rndBrk{- \frac{n \delta^2}{2(n+2)} m }$.
\end{lemma}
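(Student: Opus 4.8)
The plan is to combine the quantum sampling theorem (Theorem~\ref{th:qu_sampling}) applied to the source test with a direct computation of $D_{\max}$ for the resulting projected state. First I would set up the sampling strategy $\Psi$ that matches Protocol~\ref{frame:src_corr_prot}: the seed $K$ is Alice's classical data $(X_1^{n+m},\Theta_1^{n+m})$, the subset $\Gamma$ of size $m$ is the test set, and the ``string'' $a_1^{n+m}$ is the register that records, for each round $i$, whether the measurement outcome $\hat X_i$ in basis $\Theta_i$ disagrees with $X_i$ (so $a_i\neq 0$ is an ``error''). To make this precise I would first coherently apply, round by round, the isometry that measures $A_i$ in basis $\Theta_i$ and XORs the result into a fresh register; call the outcome register $D_i$. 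The source test aborts exactly when $\omega(a_\Gamma) = \omega(D_\Gamma) > \epsilon$, and the estimate $f$ it uses is simply $f = \omega(a_\Gamma)$, i.e.\ the Hoeffding--Serfling estimate of the relative weight of the untested part. The classical error probability $\epsilon^\delta_{\mathrm{cl}}$ for this two-sided sampling strategy is bounded by $2\exp(-\tfrac{\delta^2 n}{2(n+2)}m)$ by the Serfling bound (this is the $\epsilon^\delta_{\mathrm{cl}}$ underlying $\epsilon^\delta_{\mathrm{qu}} = \sqrt{\epsilon^\delta_{\mathrm{cl}}}$ in the statement), so Theorem~\ref{th:qu_sampling} gives a state $\tilde\rho$, $\epsilon^\delta_{\mathrm{qu}}$-close in trace distance to the ``ideal'' state, on which conditioning on the test passing forces the untested rounds into the span of strings with relative error weight at most $\epsilon + \delta$.

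Next I would pass from trace distance to purified distance and push the error through the conditioning. Since $\tilde\rho$ is $\epsilon^\delta_{\mathrm{qu}}$-close to $\rho$ in trace distance, it is $\sqrt{2\epsilon^\delta_{\mathrm{qu}}}$-ish close in purified distance; more carefully, conditioning a (sub)normalized state on an event of probability $p$ blows up the purified distance by roughly $1/\sqrt p$, which is the origin of the factor $\epsilon_f = 2\sqrt{\epsilon^\delta_{\mathrm{qu}}/\Pr_\rho(\Omega)}$ (here one uses $\Pr_\rho(\Omega) - \epsilon^\delta_{\mathrm{qu}}$ as the lower bound on the probability of passing under $\tilde\rho$). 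Thus $\bar\rho_{X_1^n\Theta_1^nA_1^n|\Omega}$ is within purified distance $\epsilon_f$ of the corresponding conditioned state $\tilde\rho_{\cdots|\Omega}$, and it suffices to bound $D_{\max}(\tilde\rho_{X_1^n\Theta_1^nA_1^n|\Omega} \,\|\, (\hat\rho^{(\epsilon+\delta)}_{X\Theta A})^{\otimes n})$ without any further smoothing.

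For that unsmoothed $D_{\max}$ bound, I would argue conditioned on each value $(\gamma,\kappa)$ of $(\Gamma, K)$ separately and then average. Fix the subset $\gamma$ and Alice's classical string $\kappa = (x_1^{n+m},\theta_1^{n+m})$; the sampling theorem tells us the post-measurement state of the untested $A$-registers lies in the subspace spanned by $\{H^{\theta}\ket{y}\}$ over untested strings $y$ whose relative Hamming distance from the corresponding substring of $x$ is at most $\epsilon+\delta$. The key inequality is the single-round operator bound: for a fixed $(x,\theta)$, the projector onto $\{H^\theta\ket{x}\}$ is $\le \frac{1}{1-2\eta}\,\big((1-\eta)H^\theta\ket{x}\bra{x}H^\theta + \eta\, \tau_A\big)\cdot\text{(const)}$ — more usefully, I will show that on the subspace of strings within relative distance $\eta := \epsilon+\delta$ of $x_1^n$, the conditional state is bounded by $2^{nh(\eta)} (\hat\rho^{(\eta)})^{\otimes n}$ up to the right normalization; the combinatorial factor $2^{nh(\eta)}$ counts the number of allowed error patterns (bounding $\sum_{j\le \eta n}\binom{n}{j}\le 2^{nh(\eta)}$) and the mixture weight $2\eta$ on $\tau_A$ in $\hat\rho^{(\eta)}$ is exactly what is needed so that each individual allowed signal state $H^\theta\ket{y}\bra{y}H^\theta$ is dominated by the $\tau_A$-component of $\hat\rho^{(\eta)}$ at that round (since $\tau_A = \tfrac12\Id_A \ge \tfrac12 H^\theta\ket{y}\bra{y}H^\theta$, the factor $2(\epsilon+\delta)$ provides the needed slack). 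Averaging over $(\gamma,\kappa)$ and dividing by the passing probability $\Pr_\rho(\Omega)-\epsilon^\delta_{\mathrm{qu}}$ contributes the $\log\frac{1}{\Pr_\rho(\Omega)-\epsilon^\delta_{\mathrm{qu}}}$ term.

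The main obstacle I anticipate is the bookkeeping in the operator-domination step: one must be careful that the random test subset $\Gamma$ means the ``allowed'' strings are constrained only on the \emph{untested} coordinates, so the tensor-power target $(\hat\rho^{(\epsilon+\delta)})^{\otimes n}$ must be compared round-by-round only after the $\Gamma$-registers are traced out/relabelled, and the fact that $\hat\rho^{(\epsilon+\delta)}$ puts weight on \emph{every} round (not just a $(\epsilon+\delta)$-fraction) is what lets a single universal target dominate all error patterns simultaneously at the cost of the $2^{nh(\epsilon+\delta)}$ counting factor. A secondary technical point is making the purified-distance/conditioning estimate ($\epsilon_f = 2\sqrt{\epsilon^\delta_{\mathrm{qu}}/\Pr_\rho(\Omega)}$) rigorous for subnormalized states; this is a standard but slightly delicate gadget (e.g.\ via the relation between purified distance and trace distance together with a bound on how conditioning rescales the fidelity), and I would isolate it as a small preliminary claim.
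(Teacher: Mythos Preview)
Your proposal is correct and takes essentially the same approach as the paper. Two minor points where the paper's execution differs and is a bit cleaner: (i) rather than introducing fresh error registers $D_i$ via an isometry, the paper conjugates each $A_i$ by the unitary $V^{x,\theta}_A:=H^\theta X^x$, so that a perfectly encoded round becomes $\ket{0}$ and Theorem~\ref{th:qu_sampling} applies directly in the computational basis on the $A$ registers themselves (this avoids having to track the $A$--$D$ entanglement when you come back to bound $D_{\max}$ on the $A$ registers); and (ii) the key operator inequality is obtained by noting that the biased-coin product $\hat\eta^{(\epsilon+\delta)}:=(1-\epsilon-\delta)\ket 0\bra 0+(\epsilon+\delta)\ket 1\bra 1$ has minimum eigenvalue $\ge 2^{-nh(\epsilon+\delta)}$ on computational basis strings of weight $\le(\epsilon+\delta)n$, which immediately gives $Q^{\epsilon+\delta}_{A_1^n}\le 2^{nh(\epsilon+\delta)}(\hat\eta^{(\epsilon+\delta)})^{\otimes n}$ and hence the desired bound after rotating back by $V$ --- the ``count error patterns and dominate per round by $\tau_A$'' heuristic you sketch does not close on its own, so you will want to switch to this eigenvalue argument when writing out the details.
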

\begin{proof}
    Define the unitaries, 
    \begin{align} 
        &V^{x, \theta}_A := H^{\theta} X^{x} \\
        &V_{X \Theta A} := \sum_{x, \theta} \ket{x, \theta} \bra{x, \theta}_{X \Theta} \otimes V^{x, \theta}_A 
        \label{eq:V_unit_defn}
    \end{align}
    so that $V_{X \Theta A} \ket{x, \theta}\ket{0}$ gives the perfect encoding of the BB84 state given $x$ and $\theta$. We also define the state 
    \begin{align}
        \nu_{X_1^{n+m} \Theta_1^{n+m} A_1^{n+m}} := \bigotimes_{i=1}^{n+m} V^\dagger_{X_i \Theta_i A_i} \ \rho_{X_1^{n+m} \Theta_1^{n+m} A_1^{n+m}}\ \bigotimes_{i=1}^{n+m} V_{X_i \Theta_i A_i}.
        \label{eq:defn_nu}
    \end{align}
    Note that if $\rho$ were perfectly encoded, then $\nu$ would be the state $\rho_{X_1^{n+m} \Theta_1^{n+m}} \otimes \rndBrk{\ket{0}\bra{0}}^{\otimes (n+m)}$. Let the register $\Gamma$ represent the choice of the random subset for sampling following the notation in Theorem \ref{th:qu_sampling}. The state produced by measuring the subset $\gamma$ of the $A$ registers of $\nu$ in the computational ($\{\ket{0}, \ket{1}\}$) basis can equivalently be produced by measuring the subset $\gamma$ of the $A$ registers of $\rho$ in the basis given by the corresponding $\Theta$ registers, adding (mod 2) the corresponding $X$ register to the result and applying the unitaries $V_{X \Theta A}$ on the remaining indices. Conditioning on the sampled qubits of $\rho$ being incorrectly encoded at most an $\epsilon$ fraction of the rounds is equivalent to measuring the corresponding random subset of the qubits of $\nu$ in the computational basis and conditioning on the relative weight of the result being less than $\epsilon$ (up to unitaries on the remaining registers; formal expression is given in Eq. \ref{eq:rho_nu_cond_eq}). Given this equivalence, we can simply work with the state $\nu$ and transform the results back to the state $\rho$ at the end. \\

    Using Theorem \ref{th:qu_sampling}, we have that for every $x_1^{n+m}, \theta_1^{n+m}$ there exists ${\eta}_{\Gamma A_1^{n+m} | x_1^{n+m}, \theta_1^{n+m}}$ such that 
    \begin{align}
        \frac{1}{2}\norm{\nu_{\Gamma A_1^{n+m} | x_1^{n+m}, \theta_1^{n+m}} - {\eta}_{\Gamma A_1^{n+m} | x_1^{n+m}, \theta_1^{n+m}}}_1 \leq \epsilon_{\text{qu}}^{\delta}
    \end{align}
    and  
    \begin{align}
        {\eta}_{\Gamma A_1^{n+m} | x_1^{n+m}, \theta_1^{n+m}} := \sum_{\gamma} p(\gamma) \ket{\gamma}\bra{\gamma} \otimes {\eta}^{(\gamma)}_{A_1^{n+m} | x_1^{n+m}, \theta_1^{n+m}}
    \end{align}
    where $p(\gamma)$ is the uniform distribution over all size $m$ subsets of $[n+m]$, and the state ${\eta}^{(\gamma)}_{A_1^{n+m} | x_1^{n+m}, \theta_1^{n+m}}$ satisfies
    \begin{align}
        {\eta}^{(\gamma)}_{A_1^{n+m} | x_1^{n+m}, \theta_1^{n+m}} = \Pi_{A_1^{n+m}}^{\delta|\gamma} {\eta}^{(\gamma)}_{A_1^{n+m} | x_1^{n+m}, \theta_1^{n+m}} \Pi_{A_1^{n+m}}^{\delta|\gamma}
        \label{eq:cond_eta_lies_in_projector}
    \end{align}
    for the projectors $\Pi_{A_1^{n+m}}^{\delta|\gamma}$ defined as in Theorem \ref{th:qu_sampling} (our sampling procedure does not require a random seed $\kappa$, so we omit it in our analysis). Note that using Hoeffding's bound the classical error probability for our sampling strategy is $2\exp(- \frac{n \delta^2}{n+2} m)$, which implies that $\epsilon_{\text{qu}}^{\delta} = \sqrt{2}\exp(- \frac{n \delta^2}{2(n+2)} m )$. We can also define the extended state ${\eta}_{\Gamma X_1^{n+m} \Theta_1^{n+m} A_1^{n+m}}$ as 
    \begin{align}
        {\eta}_{\Gamma X_1^{n+m} \Theta_1^{n+m} A_1^{n+m}} &:= \sum_{\gamma, x_1^{n+m}, \theta_1^{n+m}} p (\gamma) p(x_1^{n+m}, \theta_1^{n+m}) \nonumber\\
        & \qquad \qquad \ket{\gamma, x_1^{n+m}, \theta_1^{n+m}} \bra{\gamma, x_1^{n+m}, \theta_1^{n+m}} \otimes {\eta}^{(\gamma)}_{A_1^{n+m} | x_1^{n+m}, \theta_1^{n+m}}
        \label{eq:eta_defn}
    \end{align}
    where $p(x_1^{n+m}, \theta_1^{n+m}) = \prod_{i=1}^{n+m} p(x_i, \theta_i)$. Since, $\nu_{\Gamma X_1^{n+m} \Theta_1^{n+m} A_1^{n+m}}$ and ${\eta}_{\Gamma X_1^{n+m} \Theta_1^{n+m} A_1^{n+m}}$ have the same distributions on $X_1^{n+m}$ and $\Theta_1^{n+m}$, we also have that 
    \begin{align}
        \frac{1}{2}\norm{\nu_{\Gamma X_1^{n+m} \Theta_1^{n+m} A_1^{n+m}} - {\eta}_{\Gamma X_1^{n+m} \Theta_1^{n+m} A_1^{n+m}}}_1 \leq \epsilon_{\text{qu}}^{\delta}.
        \label{eq:nu_eta_dist_bd}
    \end{align}
    Define $\Omega'$ to be the event that the result produced by measuring the subset of registers $A_{\gamma}$ in the computational basis, where $\gamma$ is given by the $\Gamma$ register, has a relative weight less than $\epsilon$. Let $\bar{\nu}_{\Gamma X_1^{n} \Theta_1^{n} A_1^{n} \wedge \Omega'}$ be the subnormalised state produced when the relative weight of the registers $A_{\gamma}$ of $\nu_{\Gamma X_1^{n+m} \Theta_1^{n+m} A_1^{n+m}}$ is measured and conditioned on $\Omega'$, the registers $X_\gamma$ and $\Theta_\gamma$ are traced over, and the remaining $X, \Theta$ and $A$ registers are relabelled between $1$ and $n$. Also, let $\bar{\eta}_{\Gamma X_1^{n} \Theta_1^{n} A_1^{n} \wedge \Omega'}$ be the subnormalised state produced when this same subnormalised channel is instead applied to ${\eta}_{\Gamma X_1^{n+m} \Theta_1^{n+m} A_1^{n+m}}$. Let us consider the action of this map on a general state $\ket{\gamma}\bra{\gamma}\otimes \sigma_{A_1^{n+m}}^{(\gamma)}$, which satisfies the condition $\sigma_{A_1^{n+m}}^{(\gamma)} = \Pi_{A_1^{n+m}}^{\delta| \gamma} \sigma_{A_1^{n+m}}^{(\gamma)} \Pi_{A_1^{n+m}}^{\delta| \gamma}$. We can write such a state in the following form:
    \begin{align*}
        \sigma_{A_1^{n+m}}^{(\gamma)} &= \sum_{a_1^{n+m}, \bar{a}_1^{n+m} \in B^{\delta}_\gamma} \sigma^{(\gamma)}(a_1^{n+m}, \bar{a}_1^{n+m}) \ket{a_1^{n+m}}\bra{\bar{a}_1^{n+m}}
        \numberthis
    \end{align*}
    where $\sigma^{(\gamma)}(a_1^{n+m}, \bar{a}_1^{n+m}) := \bra{a_1^{n+m}}\sigma^{(\gamma)}\ket{\bar{a}_1^{n+m}}$ are the matrix elements for $\sigma^{(\gamma)}$. Let $\hat{P}_{A_1^m} := \sum_{a_1^m : \omega(a_1^m)\leq \epsilon} \ket{a_1^{m}} \bra{a_1^{m}}$ be the (perfect) measurement operator for conditioning on the event $\Omega'$. Then, the state after applying the measurement and conditioning on the $\Omega'$ is
    \begin{align*}
        \tr_{A_\gamma} \rndBrk{\hat{P}_{A_\gamma} \sigma_{A_1^{n+m}}^{(\gamma)}} &= \sum_{a_\gamma:  \omega(a_\gamma) \leq \epsilon} \sum_{\substack{a'_{\bar{\gamma}}, \bar{a}_{\bar{\gamma}} \in \{x_1^n :\\ 
        |\omega(x_1^n) - \omega(a_\gamma)| < \delta \} }} \sigma^{(\gamma)}(a_\gamma a'_{\bar{\gamma}}, a_\gamma \bar{a}_{\bar{\gamma}}) \ket{a'_{\bar{\gamma}}}\bra{\bar{a}_{\bar{\gamma}}}. \numberthis
    \end{align*}
    We can relabel the remaining registers to get the state $\bar{\sigma}_{A_1^{n} \wedge \Omega'}^{(\gamma)}$ which can be put into the form
    \begin{align}
        \bar{\sigma}_{A_1^{n} \wedge \Omega'}^{(\gamma)} &= \sum_{a_1^{n}, \bar{a}_1^{n}: \in \curlyBrk{x_1^n :\ \omega(x_1^n) < \epsilon + \delta} } \bar{\sigma}^{(\gamma)}(a_1^{n}, \bar{a}_1^{n}) \ket{a_1^{n}}\bra{\bar{a}_1^{n}}.
    \end{align}
    Let $Q^{(w)}_{A_1^n}$ be the projector on the set $\text{span}\{ \ket{x_1^n} :\ \text{for $x_1^n$ such that } \omega(x_1^n) < w \}$ (note that these vectors are perpendicular). Then, we have that 
    \begin{align}
        \bar{\sigma}_{A_1^{n} \wedge \Omega'}^{(\gamma)} = Q_{A_1^n}^{(\epsilon + \delta)} \bar{\sigma}_{A_1^{n} \wedge \Omega'}^{(\gamma)} Q_{A_1^n}^{(\epsilon + \delta)}
    \end{align}
    which implies that $\bar{\sigma}_{A_1^{n} \wedge \Omega'}^{(\gamma)} \leq Q_{A_1^n}^{(\epsilon + \delta)}$, since $\bar{\sigma}_{A_1^{n} \wedge \Omega'}^{(\gamma)}$ is subnormalised. \\

    By considering $\sigma_{A_1^{n+m}}^{(\gamma)} = \eta_{A_1^{n+m} | x_1^{n+m} \theta_1^{n+m}}^{(\gamma)}$, we see that $\bar{\eta}$ satisfies
    \begin{align*}
        \bar{\eta}_{\Gamma X_1^{n} \Theta_1^{n} A_1^{n} \wedge \Omega'} &= \sum_{\gamma, x_1^n, \theta_1^n} p(\gamma)p (x_1^n \theta_1^n) \ket{\gamma x_1^n \theta_1^n}\bra{\gamma x_1^n \theta_1^n} \otimes  \bar{\eta}^{(\gamma)}_{A_1^n | x_1^n \theta_1^n \wedge \Omega'} \\
        &\leq \sum_{\gamma, x_1^n, \theta_1^n} p(\gamma)p (x_1^n \theta_1^n) \ket{\gamma x_1^n \theta_1^n}\bra{\gamma x_1^n \theta_1^n} \otimes Q_{A_1^n}^{(\epsilon + \delta)}\\
        &= \rho_{\Gamma} \otimes \rho_{X \Theta}^{\otimes n} \otimes Q_{A_1^n}^{(\epsilon + \delta)}.
        \numberthis
    \end{align*}
    Using the data processing inequality, we also have that
    \begin{align}
        \frac{1}{2}\norm{\bar{\nu}_{\Gamma X_1^{n} \Theta_1^{n} A_1^{n} \wedge \Omega'} - \bar{\eta}_{\Gamma X_1^{n} \Theta_1^{n} A_1^{n} \wedge \Omega'}}_1 \leq \epsilon_{\text{qu}}^{\delta}
        \label{eq:conditioned_barNu_close_to_barEta}
    \end{align}
    Let $\hat{\eta}^{(\epsilon +\delta)}_A := (1- \epsilon - \delta)\ket{0}\bra{0} + (\epsilon + \delta)\ket{1}\bra{1}$ or equivalently the state $\hat{\eta}^{(\epsilon +\delta)}_A$ is the classical probability distribution over $\{0,1\}$ which is $1$ with probability $(\epsilon + \delta)$. For this distribution, a simple calculation shows that 
    \begin{align*}
        \min_{z_1^n: \omega(z_1^n) < \epsilon + \delta}\bra{z_1^n} (\hat{\eta}^{(\epsilon +\delta)}_A)^{\otimes n} \ket{z_1^n} \geq  2^{-n h(\epsilon + \delta)}
        \numberthis
    \end{align*}
    which implies that 
    \begin{align*}
        Q_{A_1^n}^{(\epsilon + \delta)} &\leq 2^{n h(\epsilon + \delta)} (\hat{\eta}^{(\epsilon +\delta)}_A)^{\otimes n}
        \numberthis
    \end{align*}
    since both $Q_{A_1^n}^{(\epsilon + \delta)}$ and $(\hat{\eta}^{(\epsilon +\delta)}_A)^{\otimes n}$ share the classical eigenbasis $\curlyBrk{\ket{z_1^n} : z_1^n \in \{0,1\}^n}$. Thus, we have 
    \begin{align}
        \bar{\eta}_{X_1^{n} \Theta_1^{n} A_1^{n} \wedge \Omega'} &\leq 2^{n h(\epsilon + \delta)} \rndBrk{\rho_{X \Theta} \otimes \hat{\eta}^{(\epsilon +\delta)}_A}^{\otimes n}.
    \end{align}
    As noted earlier, the state produced by measuring the registers $A_\gamma$ of $\nu$ in the computational basis is the same as the state produced by measuring the same registers on the real state $\rho$ in the basis given by $\Theta_i$, adding $X_i$ to the result (mod 2), and transforming the remaining registers with $\bigotimes_{k=1}^n V^{\dagger}_{X_i \Theta_i A_i}$. Under this correspondence, we have that the state produced by the source test satisfies
    \begin{align}
        \bar{\rho}_{X_1^{n} \Theta_1^{n} A_1^{n} \wedge \Omega} = \bigotimes_{i=1}^n V_{X_i \Theta_i A_i} \bar{\nu}_{X_1^{n} \Theta_1^{n} A_1^{n} \wedge \Omega'} \bigotimes_{i=1}^n V^\dagger_{X_i \Theta_i A_i}.
        \label{eq:rho_nu_cond_eq}
    \end{align}
    Further, for the state defined as
    \begin{align*}
        \bar{\tilde{\rho}}_{X_1^{n} \Theta_1^{n} A_1^{n} \wedge \Omega} &:= \bigotimes_{i=1}^n V_{X_i \Theta_i A_i} \bar{\eta}_{X_1^{n} \Theta_1^{n} A_1^{n} \wedge \Omega'} \bigotimes_{i=1}^n V^\dagger_{X_i \Theta_i A_i} \\
        &\leq 2^{n h(\epsilon + \delta)} \bigotimes_{i=1}^n \rndBrk{V_{X_i \Theta_i A_i}\ \rho_{X_i \Theta_i} \otimes \hat{\eta}^{(\epsilon +\delta)}_{A_i} V_{X_i \Theta_i A_i}^\dagger} \\
        &= 2^{n h(\epsilon + \delta)} \rndBrk{\hat{\rho}_{X \Theta A}^{(\epsilon +\delta)}}^{\otimes n} \numberthis 
        \label{eq:conditioned_rho_tilde_Dmax}
    \end{align*}
    where $\hat{\rho}_{X \Theta A}^{(\epsilon +\delta)} := (1 - 2(\epsilon + \delta)) \hat{\rho}_{X \Theta A} + 2(\epsilon + \delta) \hat{\rho}_{X \Theta}\otimes \tau_A$ for the completely mixed state $\tau_A$ on register $A$. Using Eq. \ref{eq:conditioned_barNu_close_to_barEta}, we also have
    \begin{align}
        \frac{1}{2}\norm{\bar{\rho}_{X_1^{n} \Theta_1^{n} A_1^{n} \wedge \Omega} - \bar{\tilde{\rho}}_{X_1^{n} \Theta_1^{n} A_1^{n} \wedge \Omega}}_1 \leq \epsilon_{\text{qu}}^{\delta}.
    \end{align}
    Following the argument in \cite[Lemma G.1]{Marwah23}, we can show that
    \begin{align}
        \frac{1}{2}\norm{\bar{\rho}_{X_1^{n} \Theta_1^{n} A_1^{n} | \Omega} - \bar{\tilde{\rho}}_{X_1^{n} \Theta_1^{n} A_1^{n} | \Omega}}_1 \leq \frac{2 \epsilon_{\text{qu}}^{\delta}}{\Pr_{{\rho}} (\Omega)}
        \label{eq:src_corr_dist_bd}
    \end{align}
    where $\Pr_{{\rho}} (\Omega) := \tr\rndBrk{\bar{\rho}_{X_1^{n} \Theta_1^{n} A_1^{n} \wedge \Omega}}$ is the probability of the event $\Omega$ when the testing procedure is applied to the state $\rho$, and 
    \begin{align*}
        \bar{\tilde{\rho}}_{X_1^{n} \Theta_1^{n} A_1^{n} | \Omega} &\leq \frac{2^{n h(\epsilon + \delta)}}{\Pr_{{\tilde{\rho}}} (\Omega)}\rndBrk{\hat{\rho}_{X \Theta A}^{(\epsilon +\delta)}}^{\otimes n} \\
        &\leq \frac{2^{n h(\epsilon + \delta)}}{\Pr_{{\rho}} (\Omega) - \epsilon_{\text{qu}}^{\delta}}\rndBrk{\hat{\rho}_{X \Theta A}^{(\epsilon +\delta)}}^{\otimes n} \numberthis
        \label{eq:src_corr_Dmax_bd}
    \end{align*}
    where $\Pr_{{\tilde{\rho}}} (\Omega) := \tr\rndBrk{\bar{\tilde{\rho}}_{X_1^{n} \Theta_1^{n} A_1^{n} \wedge \Omega}}$ is defined similar to $\Pr_{{\rho}} (\Omega)$. Together these imply that
    \begin{align}
        D_{\max}^{\epsilon_f} (\bar{\rho}_{X_1^{n} \Theta_1^{n} A_1^{n} | \Omega} || \rndBrk{\hat{\rho}_{X \Theta A}^{(\epsilon +\delta)}}^{\otimes n})  \leq n h(\epsilon + \delta) + \log \frac{1}{\Pr_{{\rho}} (\Omega) - \epsilon_{\text{qu}}^{\delta}}
        \label{eq:src_corr_smooth_Dmax_bd2}
    \end{align}
    where $\epsilon_f = 2 \sqrt{\frac{\epsilon_{\text{qu}}^{\delta}}{\Pr_\rho (\Omega)}}$.
\end{proof}

We now give an outline for bounding the smooth min-entropy for a BB84-QKD protocol, which uses an imperfect source. We give a complete formal proof in Appendix \ref{sec:formal_src_qkd_pf}. Let $\Phi_{\text{QKD}}$ be the CPTP map denoting the action of the entire QKD protocol on the source states produced by Alice. In order to prove security for QKD, informally speaking, it is sufficient to prove a linear lower bound for\footnote{We also need to condition on the QKD protocol not aborting. We do this in Appendix \ref{sec:formal_src_qkd_pf}}
\begin{align*}
    H^{\epsilon_f + \epsilon'}_{\min} (X_S | E T \Theta_1^n \hat{\Theta}_1^n)_{\Phi_{\text{QKD}} (\bar{\rho}_{|\Omega})}.
\end{align*}
Let us define the virtual state $\sigma_{X_1^n \Theta_1^n A_1^n} :=\rndBrk{\hat{\rho}_{X \Theta A}^{(\epsilon +\delta)}}^{\otimes n}$. This state can be viewed as the state produced when each of the qubits produced by Alice is passed through a depolarising channel. Using Lemma \ref{lemm:Hmin_rho_to_Halpha_sigma_using_Dmax}, for an arbitrary $\epsilon' >0$, we have 
\begin{align*}
    H^{\epsilon_f + \epsilon'}_{\min} &(X_S | E T \Theta_1^n \hat{\Theta}_1^n)_{\Phi_{\text{QKD}} (\bar{\rho}_{|\Omega})} \\
    &\geq \tilde{H}^{\uparrow}_{\alpha}(X_S | E T \Theta_1^n \hat{\Theta}_1^n)_{\Phi_{\text{QKD}} (\sigma)} - \frac{\alpha}{\alpha-1} D^{\epsilon_f}_{\max} (\Phi_{\text{QKD}} (\bar{\rho}_{|\Omega})|| \Phi_{\text{QKD}} (\sigma))- \frac{g_1(\epsilon', \epsilon_f)}{\alpha-1} \\
    &\geq \tilde{H}^{\uparrow}_{\alpha}(X_S | E T \Theta_1^n \hat{\Theta}_1^n)_{\Phi_{\text{QKD}} (\sigma)} - \frac{\alpha}{\alpha-1} n h(\epsilon+\delta) - \frac{O(1)}{\alpha-1}.
    \numberthis
\end{align*}
Thus, it is sufficient to bound the $\alpha$-R\'enyi conditional entropy $\tilde{H}^{\uparrow}_{\alpha}(X_S | E T \Theta_1^n \hat{\Theta}_1^n)$ for the QKD protocol running on a noisy version of the perfect source. We can now simply use standard techniques developed for the security proofs of QKD to show a linear lower bound for this conditional entropy. In particular, source purification can be used for the source state $\sigma$. In Appendix \ref{sec:formal_src_qkd_pf}, we show how one can modify the security proof for BB84-QKD based on entropy accumulation \cite[Section 5.1]{Dupuis20} to get the following bound.

\begin{theorem}
    Suppose Alice uses the output of the source test (Protocol \ref{frame:src_corr_prot}), with error threshold $\epsilon$ and any imperfect source as its input, as her source for the BB84 protocol. Let $\delta>0$ and assume that $h(\epsilon+\delta)< \frac{1}{\sqrt{2}}$. Then, for  
    \begin{align}
        & \epsilon^{\delta}_{\text{qu}} = \sqrt{2} \exp\rndBrk{-\frac{n\delta^2}{2(n+2)}m}\\
        & \epsilon_{\text{pa}} = 2\rndBrk{\frac{2\epsilon^{\delta}_{\text{qu}}}{\Pr_{\bar{\rho}}(\Omega \wedge \Upsilon'')}}^{1/2}
    \end{align}
    and $\epsilon'>0$, we have the following lower bound on the smooth min-entropy for the raw key produced during the BB84 protocol
    \begin{align*}
        H&^{\epsilon_{\text{pa}} + \epsilon'}_{\min} (X_S | E \Theta_1^n \hat{\Theta}_1^n T)_{\Phi_{\text{QKD}} (\bar{\rho})_{|\Omega \wedge \Upsilon''}} \\
        &\geq n(1- 2 \mu  - h(e) - V \sqrt{2h(\epsilon+\delta)}) - \sqrt{n} \rndBrk{\mu^2 \ln(2) + 2 \log\frac{1}{\Pr_{\bar{\rho}}(\Omega \wedge \Upsilon'')} +{g_0\rndBrk{\frac{\epsilon'}{8}}}}\\
        &- \frac{V}{\sqrt{2h(\epsilon+\delta)}} \rndBrk{\log \frac{1}{\Pr_{\bar{\rho}}(\Omega \wedge \Upsilon'') - 2\epsilon^{\delta}_{\text{qu}}} + 1} - \frac{g_1(\frac{\epsilon'}{2}, \epsilon_{\text{pa}})}{2\sqrt{2h(\epsilon+\delta)}} V   - \log|T| - 3 g_0\rndBrk{\frac{\epsilon'}{8}}
        \numberthis
    \end{align*}
    \label{th:src_cor_Hmin_bd}
    where $V := \frac{2}{\mu^2}\log\frac{1-e}{e} + 2 \log(1+2|\mathcal{X}|^2)$, $\Pr_{\bar{\rho}}(\Omega \wedge \Upsilon'')$ is the probability of the event $\Omega \wedge \Upsilon''$ for the state $\Phi_{\text{QKD}}(\bar{\rho})$ and it is assumed that $\Pr_{\bar{\rho}}(\Omega \wedge \Upsilon'') > 2 \epsilon^{\delta}_{\text{qu}}$\footnote{If $\Pr_{\bar{\rho}}(\Omega \wedge \Upsilon'') \leq 2 \epsilon^{\delta}_{\text{qu}}$, one can easily show that the secrecy condition for QKD is satisfied for a security parameter greater than $2 \epsilon^{\delta}_{\text{qu}}$ since this condition is weighted by the abort probability: $\Pr(\neg \text{Abort})\frac{1}{2}\norm{\rho^{(\text{real})}_{KE} - \rho^{(\text{perfect})}_{KE}}_1 \leq \epsilon_{sec}$ \cite[Section 4.3]{Portmann14}.}, $g_0(x) = - \log(1 - \sqrt{1-x^2})$ and $g_1(x,y) = - \log(1 - \sqrt{1-x^2}) - \log (1- y^2)$.
\end{theorem}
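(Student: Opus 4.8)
Following the outline given above, I would proceed in three stages: (i) reduce the quantity of interest, via Lemma~\ref{lemm:smoothDmax_bd_perf_meas} and the entropic triangle inequality of Lemma~\ref{lemm:Hmin_rho_to_Halpha_sigma_using_Dmax}, to a lower bound on $\tilde H^{\uparrow}_{\alpha}(X_S|E\Theta_1^n\hat\Theta_1^n T)$ for the BB84 protocol run on the i.i.d.\ noisy source $\sigma_{X_1^n\Theta_1^n A_1^n}:=(\hat\rho^{(\epsilon+\delta)}_{X\Theta A})^{\otimes n}$; (ii) establish that Rényi bound by an entropy-accumulation argument adapted to the depolarised source; and (iii) substitute back into Lemma~\ref{lemm:Hmin_rho_to_Halpha_sigma_using_Dmax} and optimise over $\alpha\in(1,2]$ jointly across all $\alpha$-dependent terms.

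For stage (i), note that the statement is conditioned on the virtual parameter-estimation event $\Upsilon''$ (that $\text{freq}(C_1^n)(1)\le e$); relating $\Upsilon''$ to the actual non-abort event is a separate, standard hashing-error argument, so here it suffices to work with $\Omega\wedge\Upsilon''$. I would take the bound of Lemma~\ref{lemm:smoothDmax_bd_perf_meas}, apply the channel $\Phi_{\text{QKD}}$ to both arguments of the smooth $D_{\max}$ (using data processing of $D_{\max}$ and the fact that purified-distance balls are mapped into purified-distance balls by channels, so that $D^{\epsilon}_{\max}$ is itself data-processing monotone), and then further condition both branches on $\Upsilon''$. Conditioning on an event of probability $p$ adds $\log(1/p)$ to $D_{\max}$ and rescales the smoothing radius by a factor $O(1/p)$, by the argument of \cite[Lemma~G.1]{Marwah23}; combined with the shift $\Pr_{\tilde\rho}(\cdot)\ge\Pr_{\rho}(\cdot)-\epsilon^{\delta}_{\text{qu}}$ already present in Lemma~\ref{lemm:smoothDmax_bd_perf_meas}, this yields $D^{\epsilon_{\text{pa}}}_{\max}\big(\Phi_{\text{QKD}}(\bar\rho)_{|\Omega\wedge\Upsilon''}\,\Vert\,\Phi_{\text{QKD}}(\sigma)_{|\Omega\wedge\Upsilon''}\big)\le n h(\epsilon+\delta)+\log\frac{1}{\Pr_{\bar\rho}(\Omega\wedge\Upsilon'')-2\epsilon^{\delta}_{\text{qu}}}$ with $\epsilon_{\text{pa}}$ as stated. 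Feeding this bound, together with $g_1(\tfrac{\epsilon'}{2},\epsilon_{\text{pa}})$ and a suitably chosen fraction of the slack $\epsilon'$, into Lemma~\ref{lemm:Hmin_rho_to_Halpha_sigma_using_Dmax} reduces the theorem to lower-bounding $\tilde H^{\uparrow}_{\alpha}(X_S|E\Theta_1^n\hat\Theta_1^n T)$ for $\Phi_{\text{QKD}}(\sigma)$ conditioned on $\Upsilon''$.

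For stage (ii), since $\sigma$ is i.i.d.\ with its $A$-registers in tensor product, I would purify the source round by round (Alice holds half of a maximally entangled pair and measures it in basis $\Theta_i$ to produce $X_i$, with the $2(\epsilon+\delta)$-weight depolarising component of $\hat\rho^{(\epsilon+\delta)}_{X\Theta A}$ absorbed into the honest channel preparing $A_i$), so that $\Phi_{\text{QKD}}(\sigma)$ fits the entropy-accumulation template with one channel per round emitting $X_i$, the round's error bit $C_i$, and the publicly revealed $\Theta_i,\hat\Theta_i,T$. The essential input is a min-tradeoff function bounding the per-round $\tilde H^{\uparrow}_{\alpha}(X_i|E\,\Theta_i\hat\Theta_i)$ in terms of the expected error frequency; this is the usual BB84 complementarity estimate (the observed $X$-basis error frequency controls the $Z$-basis phase-error frequency), now with the phase-error rate inflated by the depolarisation --- the source of the factor $\sqrt{2h(\epsilon+\delta)}$ --- and with $V=\frac{2}{\mu^2}\log\frac{1-e}{e}+2\log(1+2|\mathcal{X}|^2)$ in the role of the gradient/variance constant of the tradeoff function ($\log(1+2|\mathcal{X}|^2)$ being the dimension term in the entropy-accumulation variance bound). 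Applying entropy accumulation \cite{Dupuis20} in its Rényi form and conditioning on $\Upsilon''$ then produces a bound of the shape $\tilde H^{\uparrow}_{\alpha}(X_S|E\Theta_1^n\hat\Theta_1^n T)_{\Phi_{\text{QKD}}(\sigma)_{|\Upsilon''}}\ge n(1-2\mu-h(e))-c_1(\alpha-1)nV^2-\sqrt n\big(\mu^2\ln 2+2\log\tfrac{1}{\Pr_{\bar\rho}(\Omega\wedge\Upsilon'')}+c_2\big)-\log|T|$, where $\log|T|$ pays for revealing the error-correction transcript and $c_1,c_2$ are explicit constants (the latter absorbing the $g_0(\epsilon'/8)$-type smoothing corrections).

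For stage (iii), substituting into Lemma~\ref{lemm:Hmin_rho_to_Halpha_sigma_using_Dmax} and collecting the $\alpha$-dependence: the triangle-inequality penalty $\frac{\alpha}{\alpha-1}\big(n h(\epsilon+\delta)+\log\frac{1}{\Pr_{\bar\rho}(\Omega\wedge\Upsilon'')-2\epsilon^{\delta}_{\text{qu}}}+g_1\big)$ is decreasing in $\alpha$ while the accumulation penalty $c_1(\alpha-1)nV^2$ is increasing, so balancing $\frac{\alpha}{\alpha-1}nh(\epsilon+\delta)$ against $(\alpha-1)nV^2$ at $\alpha-1\asymp\sqrt{2h(\epsilon+\delta)}/V$ --- the choice for which the hypothesis $h(\epsilon+\delta)<\tfrac{1}{\sqrt2}$ keeps $\alpha\le2$ --- produces the linear correction $-nV\sqrt{2h(\epsilon+\delta)}$ and converts $\frac{\alpha}{\alpha-1}$ into the coefficients $\frac{V}{\sqrt{2h(\epsilon+\delta)}}$ and $\frac{V}{2\sqrt{2h(\epsilon+\delta)}}$ multiplying the $\log$ and $g_1$ terms. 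I expect the main obstacle to be twofold: first, producing a valid min-tradeoff function for the \emph{depolarised} i.i.d.\ source that still certifies the asymptotically optimal rate while keeping its gradient controlled by $V$ (so that the depolarisation costs only $O(\sqrt{h(\epsilon+\delta)})$ per round rather than $O(h(\epsilon+\delta))$); and second --- more tediously --- propagating the two successive conditionings ($\Omega$ from the source test, then $\Upsilon''$ from parameter estimation) together with all the smoothing parameters and the various probabilities ($\Pr_{\rho}$, $\Pr_{\tilde\rho}$, $\Pr_{\bar\rho}$, $\Pr_{\sigma}$ of $\Omega$ and $\Upsilon''$) through the triangle inequality so that the renormalisation constants assemble into exactly $\epsilon_{\text{pa}}$ and $\Pr_{\bar\rho}(\Omega\wedge\Upsilon'')-2\epsilon^{\delta}_{\text{qu}}$. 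It is this bookkeeping, rather than any individual inequality, that is the delicate part of the argument.
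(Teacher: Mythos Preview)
Your high-level architecture matches the paper: bound $D_{\max}^{\epsilon}$ via Lemma~\ref{lemm:smoothDmax_bd_perf_meas}, push it through $\Phi_{\text{QKD}}$ and condition on $\Upsilon''$, invoke Lemma~\ref{lemm:Hmin_rho_to_Halpha_sigma_using_Dmax}, apply EAT to the depolarised i.i.d.\ source, then set $\alpha=1+2\sqrt{2h(\epsilon+\delta)}/V$. But stage~(ii) hides a real gap. You propose to apply EAT directly to $\tilde H^{\uparrow}_{\alpha}(X_S|E\Theta_1^n\hat\Theta_1^n T)$ and claim this already produces the $\sqrt n\,(\mu^2\ln 2+\cdots)$ and $3g_0(\epsilon'/8)$ corrections. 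It does not: in the EAT of \cite{Dupuis20} the testing register $C_i=X_i\oplus Y_i$ must be a deterministic function of the round outputs, so Bob's outcome $\bar Y_i$ has to be part of the output, and Proposition~4.5 there gives $\tilde H^{\uparrow}_{\alpha}(\bar X_1^n\bar Y_1^n|E\Theta_1^n\hat\Theta_1^n)\ge n(1-2\mu+\mu^2-h(e))-n\frac{\alpha-1}{4}V^2-\frac{\alpha}{\alpha-1}\log\frac{1}{\Pr_\sigma(\Upsilon'')}$ with no $\sqrt n$ term. The paper therefore inserts, \emph{before} the triangle inequality, the smooth-entropy chain rule \cite[Theorem~15]{Vitanov13} to split off $H^{\epsilon_2}_{\max}(\bar Y_1^n|\bar X_1^n E\Theta_1^n\hat\Theta_1^n T)$ at cost $3g_0(\epsilon_3)$; this $H_{\max}$ is then bounded by a purely classical R\'enyi computation on the number of $X$-basis rounds, which is exactly the origin of $n\mu^2+\sqrt n\,(\mu^2\ln 2+2\log\frac{1}{\Pr_{\bar\rho}(\Omega\wedge\Upsilon'')}+g_0(\epsilon_2))$ (the $n\mu^2$ cancels the $+\mu^2$ in the EAT rate). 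The split $\epsilon_1=\epsilon'/2$, $\epsilon_2=\epsilon_3=\epsilon'/8$ is dictated by this chain-rule step, not by EAT.

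Your worry about building a min-tradeoff function ``for the depolarised source'' is misplaced, and the remark in stage~(ii) that the depolarisation is ``the source of the factor $\sqrt{2h(\epsilon+\delta)}$'' contradicts your own (correct) derivation in stage~(iii). In the paper the $2(\epsilon+\delta)$-depolarising channel on each $A_i$ is simply absorbed into Eve's attack (she can always apply it herself), so the min-tradeoff function is \emph{literally} the standard BB84 one of \cite[Claim~5.2]{Dupuis20} with the same gradient $V$; depolarisation contributes nothing to stage~(ii). The $\sqrt{2h(\epsilon+\delta)}$ arises solely from balancing the $\frac{\alpha}{\alpha-1}nh(\epsilon+\delta)$ penalty of Lemma~\ref{lemm:Hmin_rho_to_Halpha_sigma_using_Dmax} against $\frac{\alpha-1}{4}nV^2$, as you say in stage~(iii). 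A last notational slip: $\sigma$ never undergoes the source test, so the second argument of your smooth $D_{\max}$ should be $\Phi_{\text{QKD}}(\sigma)_{|\Upsilon''}$, not conditioned on $\Omega$; the resulting $\log\frac{1}{\Pr_\sigma(\Upsilon'')}$ then cancels against the identical term inside the EAT bound, leaving only $\Pr_{\bar\rho}(\Omega\wedge\Upsilon'')$ in the final expression.
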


According to the Theorem above, the asymptotic key rate for the BB84 protocol using an imperfect source is $V\sqrt{2h(\epsilon + \delta)}$ lesser than a protocol, which uses a perfect source. Our aim was to provide a complete analytical security proof, so we have not optimised for constants and there remains significant room for improvement. That being said, let's analyse the behaviour of the rate loss $V\sqrt{2h(\epsilon + \delta)}$. In our analysis, the dependence of $V$ on $\mu$ is quite poor. We have $V = \Theta(1/{\mu^2})$ above. This can be improved to $\Theta(1/{\mu})$ fairly easily by using the entropy accumulation theorem with the improved second order term \cite{Dupuis19}. \cite{Dupuis19} also shows that for a BB84 protocol with perfect source, $\mu$ can be chosen as small as $c/\sqrt{n}$ for some constant $c>0$. Clearly, one cannot choose such small values ($o(1)$) for $\mu$ for our protocol. Further work is required to understand the dependence of the rate loss on $\mu$. It might be possible to avoid it using an entropic uncertainty relation based proof to bound $\tilde{H}^{\uparrow}_{\alpha}$ (similar to \cite{Tomamichel12} but using \cite[Theorem 1]{Coles12}). In the bound above, $V$ also increases as the error threshold $e$ is made smaller. This appears to be a result of the choice of linearisation used in the entropy accumulation theorem. This also makes the key rate corresponding to the bound above increasing in $e$ for small $e$. A simple solution to make the rate monotonically decreasing in the error would be to maximise over thresholds $e$ which are greater than the expected error\footnote{This was pointed to us by Ernest Tan.}. However, we leave the bound in the form above as it clearly shows the additional loss due to source correlations. 

\section{Imperfect measurements}
\label{sec:imp_meas}

In our analysis above, we assumed that the measurements used in the source test are perfect. It should be noted that if the source produces states at a rate $r_s$, then the measurement device is only used at an average rate $\frac{m}{n+m}r_s$, which is much smaller than $r_s$. So, the measurement devices have a much longer relaxation time than the source. As such, it should be easier to create almost ``perfect'' measurement devices than it is to create perfect sources. \\

In this section, we will show how measurement imperfections can also be incorporated in our analysis. Let $\Lambda\rndBrk{\leq \epsilon | \gamma, x_\gamma , \theta_\gamma}_{A_\gamma}$ be the POVM element associated with the source test passing, i.e., with measuring a relative weight less than $\epsilon$ with respect to the encoded random bits given the choice of random subset $\gamma$, encoded random bits $x_\gamma$, and basis choice $\theta_\gamma$. Informally speaking, in this subsection, we assume that this measurement measures the relative weight with an error at most $\epsilon_m$ with high probability. To formally state our assumption, define
\begin{align}
    & \hat{P}^{ x_\gamma, \theta_\gamma}_{A_\gamma} := \bigotimes_{i\in \gamma} V^{x_i, \theta_i}_{A_i} \rndBrk{\sum_{a_\gamma: \omega(a_\gamma) \leq \epsilon + \epsilon_m} \ket{a_\gamma}\bra{a_\gamma}_{A_\gamma}} \bigotimes_{i \in \gamma} \rndBrk{V^{x_i, \theta_i}_{A_i}}^\dagger \\
    & \hat{P}^{\perp| x_\gamma, \theta_\gamma}_{A_\gamma} := \Id_{A_1^n} - \hat{P}^{ x_\gamma, \theta_\gamma}_{A_\gamma}
\end{align}
to be the projectors on the subspace with relative weight at most $\epsilon + \epsilon_m$, and at least $\epsilon + \epsilon_m$ with respect to $x_\gamma$ in the basis $\theta_\gamma$. Here the parameter $\epsilon$ is the same as the source error threshold in the previous section and $\epsilon_m>0$ is a small parameter quantifying the measurement device error. The projector $\hat{P}^{ x_\gamma, \theta_\gamma}_{A_\gamma}$ is the rotated version of projector $\hat{P}$, which was used for the measurement map in the previous section. In this section, we need to use the rotated version because the real measurements in an implementation will depend on the inputs $\gamma, x_\gamma$ and $\theta_\gamma$. \\

\begin{sloppypar}
We assume that for some fixed small $\xi >0$ the measurement elements $\{\Lambda\rndBrk{\leq \epsilon | \gamma, x_\gamma , \theta_\gamma}_{A_\gamma}\}_{\gamma, x_\gamma, \theta_\gamma}$ satisfy the following for every collection of states $\{\sigma^{(\gamma)}_{A_\gamma| x_\gamma \theta_\gamma}\}_{\gamma, x_\gamma, \theta_\gamma}$:
\end{sloppypar}
\begin{align}
    \sum_{\gamma}p(\gamma) \sum_{x_\gamma, \theta_\gamma} p(x_\gamma, \theta_\gamma) \tr\rndBrk{\Lambda\rndBrk{\leq \epsilon | \gamma, x_\gamma , \theta_\gamma}_{A_\gamma} \hat{P}_{A_\gamma}^{\perp| x_\gamma, \theta_\gamma} \sigma^{(\gamma)}_{A_\gamma| x_\gamma \theta_\gamma}  \hat{P}_{A_\gamma}^{\perp| x_\gamma, \theta_\gamma}} \leq \xi.
    \label{eq:assmpn_imp_meas}
\end{align}
Stated in words, we require that for any collection of states $\{\sigma^{(\gamma)}_{A_\gamma| x_\gamma \theta_\gamma}\}_{\gamma, x_\gamma, \theta_\gamma}$ with a relative weight larger than $\epsilon + \epsilon_m$ (lying in the subspace corresponding to the projector $\hat{P}_{A_\gamma}^{\perp| x_\gamma, \theta_\gamma}$), the probability that a weight lesser than $\epsilon$ is measured is smaller than $\xi$ when averaged over the choice of the random set $\gamma$ and $x_\gamma, \theta_\gamma$. Using this assumption on the measurements, in Lemma \ref{lemm:smoothDmax_bd_imp_meas} we will derive a smooth max-relative entropy bound similar to the one in the previous section. The smoothing parameter of the relative entropy in this bound, however, will depend on $\xi$, which in turn implies that the privacy amplification error of the subsequent QKD protocol will be lower bounded by a function of $\xi$. It does not seem that this dependence of the smoothing parameter on $\xi$ can be avoided. For example, if the measurements measure a small weight for a set of large weight states and the source emits those states, then they can be exploited by Eve to extract additional information during the QKD protocol. It also seems that we cannot use some kind of joint test for the source and measurement device (similar to Protocol \ref{frame:src_corr_prot}) without an additional assumption to ensure that the weight measured by the measurement device is almost correct, since the source can always embed its information using an arbitrary unitary and the measurement can always decode that information using the same unitary. \\

I.I.D measurements with error $\epsilon'_m$ or more generally measurements, which are guaranteed to measure each input qubit correctly with probability at least $(1- \epsilon'_m)$ independent of the previous rounds (both these examples consider measurements which measure the qubits $A_\gamma$ in the provided basis $\theta_\gamma$ to produce the results $\hat{x}_\gamma$ and then use these results to test if $\omega(x_\gamma \oplus \hat{x}_\gamma) \leq \epsilon$ or not), satisfy the above assumption for the choice of some $\delta' >0$, $\epsilon_m = \epsilon'_m + \delta'$ and $\xi = e^{-2m{\delta'}^2}$ (using the Chernoff-Hoeffding bound). Additionally, since we average over the random set $\gamma$ as well, it is possible to guarantee with high probability that for most test measurements the relaxation time of the measurement device is large. This should enable us to model a large and practical class of measurements using these assumptions. We leave the details for the specific measurement model for future work.\\ 

We will show that for measurements, which satisfy the above assumption the following Lemma holds. One can use this bound in place Lemma \ref{lemm:smoothDmax_bd_perf_meas} to prove a smooth min-entropy lower bound for the QKD protocol, similar to the previous section. Note that the following proof builds on the proof of Lemma \ref{lemm:smoothDmax_bd_perf_meas} and makes use of definitions used in that proof. 

\begin{lemma}
    \label{lemm:smoothDmax_bd_imp_meas}
    Suppose that the measurements used for the source test $\{\Lambda\rndBrk{\leq \epsilon | \gamma, x_\gamma , \theta_\gamma}_{A_\gamma}\}_{\gamma, x_\gamma, \theta_\gamma}$ satisfy the assumption in Eq. \ref{eq:assmpn_imp_meas} with parameters $\epsilon_{m}$ and $\xi$. Let $\epsilon$ be the threshold of the source test, $\delta \in (0,1)$ a small parameter, and let $\hat{\rho}_{X \Theta A}^{(\epsilon+\epsilon_m +\delta)} := (1 - 2(\epsilon +\epsilon_m +\delta)) \hat{\rho}_{X \Theta A} + 2(\epsilon+\epsilon_m+ \delta) \hat{\rho}_{X \Theta}\otimes \tau_A$ where $\tau_A$ is the completely mixed state on the register $A$. Let the event $\Omega_{\text{im}}$ denote that the source test using the imperfect measurements succeeds and let state $\rho'_{X_1^{n} \Theta_1^{n} A_1^{n}|\Omega_{\text{im}}}$ denote the state produced by the source test conditioned on passing. For this state, we have that
    \begin{align*}
        D_{\max}^{\epsilon_{f}} (\rho'_{X_1^n \Theta_1^n A_1^n | \Omega_{\text{im}}}|| \rndBrk{\hat{\rho}_{X \Theta A}^{(\epsilon+\epsilon_m +\delta)}}^{\otimes n}) &\leq n h(\epsilon+\epsilon_m +\delta) + 2 + \log\frac{1}{\Pr_{\rho}( \Omega_{\text{im}}) - \epsilon^{\delta}_{\text{qu}}} \\
        &+ \log\frac{1}{4\xi (\Pr_{\rho}( \Omega_{\text{im}}) - \epsilon^{\delta}_{\text{qu}} - 4\xi)}
        \numberthis
    \end{align*}
    where $\Pr_{\rho}(\Omega_{\text{im}})$ is the probability of the event $\Omega_{\text{im}}$ when the testing procedure is applied to the state $\rho$, $h(x) = - x \log(x) - (1-x)\log (1-x)$ is the binary entropy function and $\epsilon_{f} := \frac{2\xi^{1/2}}{\sqrt{\Pr_{{\rho}}( \Omega_{\text{im}})- \epsilon^{\delta}_{\text{qu}}}} + 2\sqrt{\frac{\epsilon^{\delta}_{\text{qu}}}{\Pr_{\rho}( \Omega_{\text{im}})}}$ for $\epsilon_{\text{qu}}^{\delta} = \sqrt{2}\exp\rndBrk{- \frac{n \delta^2}{2(n+2)} m }$
\end{lemma}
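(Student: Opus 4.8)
The plan is to mimic the proof of Lemma \ref{lemm:smoothDmax_bd_perf_meas} as closely as possible, but with two modifications: first, the perfect conditioning projector $\hat P_{A_\gamma}$ is replaced by the genuine (possibly non-projective) POVM element $\Lambda(\le\epsilon\mid\gamma,x_\gamma,\theta_\gamma)_{A_\gamma}$; second, before applying the sampling argument one splits each post-sampling state $\eta^{(\gamma)}_{A_1^{n+m}\mid x_1^{n+m}\theta_1^{n+m}}$ into the part supported on $\hat P^{x_\gamma,\theta_\gamma}_{A_\gamma}$ (relative weight $\le\epsilon+\epsilon_m$ with respect to $x_\gamma$ in basis $\theta_\gamma$) and the part supported on $\hat P^{\perp\mid x_\gamma,\theta_\gamma}_{A_\gamma}$. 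On the good part, applying $\Lambda$ behaves essentially like the perfect projector $\hat P$ but with threshold $\epsilon+\epsilon_m$ instead of $\epsilon$, so the same chain of inequalities from the previous proof goes through and yields a state supported (up to the unitaries $V$) on $Q_{A_1^n}^{\epsilon+\epsilon_m+\delta}\le 2^{nh(\epsilon+\epsilon_m+\delta)}(\hat\eta^{(\epsilon+\epsilon_m+\delta)}_A)^{\otimes n}$, hence dominated by $2^{nh(\epsilon+\epsilon_m+\delta)}(\hat\rho^{(\epsilon+\epsilon_m+\delta)}_{X\Theta A})^{\otimes n}$. On the bad part, assumption \eqref{eq:assmpn_imp_meas} says exactly that the total weight of $\Lambda^{1/2}\hat P^{\perp}\sigma\hat P^{\perp}\Lambda^{1/2}$, averaged over $\gamma,x_\gamma,\theta_\gamma$, is at most $\xi$. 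The first step is therefore to set up the decomposition $\eta^{(\gamma)} = \hat P\eta^{(\gamma)}\hat P + \hat P^\perp\eta^{(\gamma)}\hat P^\perp + (\text{cross terms})$, argue that after applying $\Lambda$ and tracing out $A_\gamma$ the cross terms can be absorbed, and bound the contribution of the bad part in trace norm using \eqref{eq:assmpn_imp_meas}.

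The second step is to carry the two bounds back to the real state $\rho$ via the unitaries $V_{X_i\Theta_i A_i}$ exactly as in \eqref{eq:rho_nu_cond_eq}, obtaining a subnormalised state $\rho'_{X_1^n\Theta_1^n A_1^n\wedge\Omega_{\mathrm{im}}}$ that is, up to a correction of trace norm $\le 2\xi$ (the bad part; the factor coming from summing the two cross-term-free halves together with the assumption averaging, which accounts for the $4\xi$ appearing later), within $\epsilon^\delta_{\mathrm{qu}}$ of a state $\bar{\tilde\rho}$ dominated by $2^{nh(\epsilon+\epsilon_m+\delta)}(\hat\rho^{(\epsilon+\epsilon_m+\delta)}_{X\Theta A})^{\otimes n}$ on the relevant subspace. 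Concretely I would write $\rho'_{\wedge\Omega_{\mathrm{im}}} = \rho'^{\mathrm{good}} + \rho'^{\mathrm{bad}} + (\text{sampling error})$, with $\rho'^{\mathrm{good}}\le 2^{nh(\epsilon+\epsilon_m+\delta)}(\hat\rho^{(\epsilon+\epsilon_m+\delta)}_{X\Theta A})^{\otimes n}$, $\tr(\rho'^{\mathrm{bad}})\le$ a small multiple of $\xi$, and sampling error of trace norm $\le\epsilon^\delta_{\mathrm{qu}}$.

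The third step is the smoothing-and-normalisation bookkeeping. After dividing by $P_\rho(\Omega_{\mathrm{im}})$ to normalise, the trace-distance to the ``good'' normalised state picks up both the $2\epsilon^\delta_{\mathrm{qu}}/P_\rho(\Omega_{\mathrm{im}})$ term (as in \cite[Lemma G.1]{Marwah23}, reproduced in \eqref{eq:src_corr_dist_bd}) and a term of order $\xi/P_\rho(\Omega_{\mathrm{im}})$ from dropping $\rho'^{\mathrm{bad}}$; taking square roots to pass from trace distance to purified distance gives the stated $\epsilon_f = \frac{2\xi^{1/2}}{\sqrt{P_\rho(\Omega_{\mathrm{im}})-\epsilon^\delta_{\mathrm{qu}}}} + 2\sqrt{\frac{\epsilon^\delta_{\mathrm{qu}}}{P_\rho(\Omega_{\mathrm{im}})}}$. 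For the $D_{\max}$ value, the good normalised state satisfies $\rho'^{\mathrm{good}}/Z \le \frac{2^{nh(\epsilon+\epsilon_m+\delta)}}{Z}(\hat\rho^{(\epsilon+\epsilon_m+\delta)}_{X\Theta A})^{\otimes n}$ where $Z = P_{\tilde\rho}(\Omega_{\mathrm{im}})\ge P_\rho(\Omega_{\mathrm{im}})-\epsilon^\delta_{\mathrm{qu}}-(\text{const})\xi$ is the weight remaining after removing both the bad part and the sampling error; the extra $+2$ and the $\log\frac1{4\xi(P_\rho(\Omega_{\mathrm{im}})-\epsilon^\delta_{\mathrm{qu}}-4\xi)}$ term arise from the careful accounting of how the bad part, after smoothing it away, forces one to pay for its weight in the $D_{\max}$ denominator — one keeps a $\approx 4\xi$ slice of mass for the smoothing and the remaining mass is $\ge P_\rho(\Omega_{\mathrm{im}})-\epsilon^\delta_{\mathrm{qu}}-4\xi$. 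The main obstacle, and the place that needs real care rather than routine manipulation, is precisely this last step: controlling the cross terms $\hat P\eta^{(\gamma)}\hat P^\perp + \hat P^\perp\eta^{(\gamma)}\hat P$ after the non-projective measurement $\Lambda$ is applied, and then optimising the split of the total ``bad'' mass between what is smoothed away and what is left to inflate the $D_{\max}$ denominator, so that the constants $2$ and $4\xi$ come out exactly as stated. Everything else is a transcription of the perfect-measurement proof with $\epsilon\mapsto\epsilon+\epsilon_m$.
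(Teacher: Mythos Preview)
Your overall structure is right --- use the sampling state $\eta$, split along $\hat P^{x_\gamma,\theta_\gamma}$ versus $\hat P^{\perp\mid x_\gamma,\theta_\gamma}$, bound the good part exactly as in Lemma~\ref{lemm:smoothDmax_bd_perf_meas} with $\epsilon\mapsto\epsilon+\epsilon_m$, and use the assumption \eqref{eq:assmpn_imp_meas} to control the bad part --- but the mechanism you propose for the last step is not the one the paper uses, and your version has a real gap.

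The paper does \emph{not} decompose $\eta^{(\gamma)}$ additively and then try to smooth away the bad piece in trace norm. Instead it applies the \emph{pinching inequality} to get the operator bound
\[
\tr_{A_\gamma}\bigl(\Lambda\,\tilde\rho^{(\gamma)}\bigr)\;\le\;2\Bigl[\tr_{A_\gamma}\bigl(\Lambda\,\hat P\,\tilde\rho^{(\gamma)}\hat P\bigr)\;+\;\tr_{A_\gamma}\bigl(\Lambda\,\hat P^{\perp}\tilde\rho^{(\gamma)}\hat P^{\perp}\bigr)\Bigr],
\]
so there are no cross terms at all at the operator level. After averaging and normalising this yields an inequality of the form
\[
\tilde\rho'_{|\Omega_{\mathrm{im}}}\;\le\;\lambda'\,\bigl(\hat\rho^{(\epsilon+\epsilon_m+\delta)}_{X\Theta A}\bigr)^{\otimes n}\;+\;\mu\,Z,
\qquad \mu=\tfrac{4\xi}{P_{\tilde\rho}(\Omega_{\mathrm{im}})},\quad \tr Z\le\tfrac12,
\]
which the paper recognises as precisely the \emph{dual characterisation of the hypothesis-testing relative entropy} $D_h^{\mu}$. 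The final step is then the Anshu--Berta--Tomamichel conversion \cite[Theorem~5.11]{JW-Adv-QIT_HypTest} from $D_h^{\mu}$ to $D_{\max}^{\sqrt{\mu}}$, which is exactly what produces both the smoothing radius $\sqrt{\mu}\sim\xi^{1/2}$ and the additive $\log\frac{1}{\mu(1-\mu)}$ term. The ``$+2$'' is $\log 2$ from pinching plus $-\log(1-\tr Z)=\log 2$ from the $D_h$ dual.

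Your alternative --- write $\rho'=\rho'^{\mathrm{good}}+\rho'^{\mathrm{bad}}+\text{cross}$ and smooth away $\rho'^{\mathrm{bad}}$ in trace distance --- runs into trouble because $\tr_{A_\gamma}(\Lambda\hat P\eta\hat P)$ is not $\le\rho'$, so $\rho'-\rho'^{\mathrm{good}}$ is not positive and its trace norm is governed by the cross terms. Bounding those via Cauchy--Schwarz gives only $\|\text{cross}\|_1\lesssim\sqrt{\xi}$, hence purified distance $\sim\xi^{1/4}$, strictly worse than the stated $\epsilon_f$. And even granting pinching, you would be left with an operator inequality $\rho'\le A+B$ with $\tr B$ small, which is not a trace-distance statement; converting that to a smooth-$D_{\max}$ bound is exactly what the $D_h^{\mu}\to D_{\max}^{\sqrt{\mu}}$ theorem does, and there is no elementary bypass that reproduces the stated constants. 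So the ``main obstacle'' you flag is resolved in the paper by pinching plus the hypothesis-testing route, not by the direct smoothing you outline.
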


\begin{proof}
    For every $x_1^{n+m}$ and $\theta_1^{n+m}$, we define the following appropriately rotated versions of the projector $\Pi^{\delta | \gamma}_{A_1^{n+m}}$ given by Theorem \ref{th:qu_sampling}, so that we can compare the relative weight with the string $x_1^{n+m}$ in the basis given by $\theta_1^{n+m}$.
    \begin{align}
        \bar{\Pi}^{\delta | \gamma, x_1^{n+m}, \theta_1^{n+m}}_{A_1^{n+m}} := \bigotimes_{i=1}^{n+m} V^{x_i, \theta_i}_{A_i} \Pi^{\delta | \gamma}_{A_1^{n+m}} \bigotimes_{i=1}^{n+m} (V^{x_i, \theta_i}_{A_i})^{\dagger}
    \end{align}
    where the unitaries $V^{x_i, \theta_i}_{A_i}$ are defined in Eq. \ref{eq:V_unit_defn}. We use the state $\eta$ from the previous section (Eq. \ref{eq:eta_defn}) to define the state 
    \begin{align}
        \tilde{\rho}_{\Gamma X_1^{n+m} \Theta_1^{n+m} A_1^{n+m}} := \bigotimes_{i=1}^{n+m} V_{X_i \Theta_i A_i} \eta_{\Gamma X_1^{n+m} \Theta_1^{n+m} A_1^{n+m}} \bigotimes_{i=1}^{n+m} V^\dagger_{X_i \Theta_i A_i}.
        \label{eq:defn_tilde_rho}
    \end{align}
    Using the distance bound proven in Eq. \ref{eq:nu_eta_dist_bd} and the definition of $\nu$ in Eq. \ref{eq:defn_nu}, we have
    \begin{align*}
        \frac{1}{2}\norm{\rho_{\Gamma X_1^{n+m} \Theta_1^{n+m} A_1^{n+m}} - \tilde{\rho}_{\Gamma X_1^{n+m} \Theta_1^{n+m} A_1^{n+m}}}_1 \leq \epsilon^{\delta}_{\text{qu}}
        \numberthis
    \end{align*}
    The conditional states $\tilde{\rho}^{(\gamma)}_{A_1^{n+m}| x_1^{n+m} \theta_1^{n+m}}$ of the state $\tilde{\rho}$ above satisfy
    \begin{align*}
        \bar{\Pi}&^{\delta | \gamma, x_1^{n+m}, \theta_1^{n+m}}_{A_1^{n+m}} \tilde{\rho}^{(\gamma)}_{A_1^{n+m}| x_1^{n+m} \theta_1^{n+m}} \bar{\Pi}^{\delta | \gamma, x_1^{n+m}, \theta_1^{n+m}}_{A_1^{n+m}} \\
        &= \bigotimes_{i=1}^{n+m} V^{x_i, \theta_i}_{A_i} \Pi^{\delta | \gamma}_{A_1^{n+m}} \bigotimes_{i=1}^{n+m} (V^{x_i, \theta_i}_{A_i})^{\dagger} \tilde{\rho}^{(\gamma)}_{A_1^{n+m}| x_1^{n+m} \theta_1^{n+m}} \bigotimes_{i=1}^{n+m} V^{x_i, \theta_i}_{A_i} \Pi^{\delta | \gamma}_{A_1^{n+m}} \bigotimes_{i=1}^{n+m} (V^{x_i, \theta_i}_{A_i})^{\dagger} \\
        &= \bigotimes_{i=1}^{n+m} V^{x_i, \theta_i}_{A_i} \Pi^{\delta | \gamma}_{A_1^{n+m}} {\eta}^{(\gamma)}_{A_1^{n+m}| x_1^{n+m} \theta_1^{n+m}} \Pi^{\delta | \gamma}_{A_1^{n+m}} \bigotimes_{i=1}^{n+m} (V^{x_i, \theta_i}_{A_i})^{\dagger} \\
        &= \bigotimes_{i=1}^{n+m} V^{x_i, \theta_i}_{A_i} {\eta}^{(\gamma)}_{A_1^{n+m}| x_1^{n+m} \theta_1^{n+m}} \bigotimes_{j=1}^{n+m} (V^{x_i, \theta_i}_{A_i})^{\dagger} \\
        &= \tilde{\rho}^{(\gamma)}_{A_1^{n+m}| x_1^{n+m} \theta_1^{n+m}}
        \numberthis
    \end{align*} 
    where we have used the definition of $\tilde{\rho}^{(\gamma)}_{A_1^{n+m}| x_1^{n+m} \theta_1^{n+m}}$ (Eq. \ref{eq:defn_tilde_rho}) in the second equality, and Eq. \ref{eq:cond_eta_lies_in_projector} for the fourth line. \\

    We call the subnormalised state produced after performing the (imperfect) measurements on the states $\rho_{\Gamma X_1^{n+m} \Theta_1^{n+m} A_1^{n+m}}$, conditioning on the event $\Omega_{\text{im}}$ and tracing over the registers $X_{\Gamma}$ and $\Theta_{\Gamma}$ as ${\rho}'_{{\Gamma} X_{\bar{{\Gamma}}} \Theta_{\bar{{\Gamma}}} A_{\bar{{\Gamma}}} \wedge \Omega_{\text{im}}}$. Similarly, we let $\tilde{\rho}'_{{\Gamma} X_{\bar{{\Gamma}}} \Theta_{\bar{{\Gamma}}} A_{\bar{{\Gamma}}} \wedge \Omega_{\text{im}}}$ denote the subnormalised state produced when this subnormalised map is applied to $\tilde{\rho}_{\Gamma X_1^{n+m} \Theta_1^{n+m} A_1^{n+m}}$. We have that 
    \begin{align*}
        \tilde{\rho}'_{{\Gamma} X_{\bar{{\Gamma}}} \Theta_{\bar{{\Gamma}}} A_{\bar{{\Gamma}}} \wedge \Omega_{\text{im}}} &= \sum_{\gamma, x_{\bar{\gamma}}, \theta_{\bar{\gamma}}} p(\gamma) p(x_{\bar{\gamma}}, \theta_{\bar{\gamma}}) \ket{\gamma, x_{\bar{\gamma}}, \theta_{\bar{\gamma}}} \bra{\gamma, x_{\bar{\gamma}}, \theta_{\bar{\gamma}}} \otimes \\
        &\qquad \qquad \sum_{x_\gamma, \theta_\gamma} p(x_\gamma, \theta_\gamma) \tr_{A_\gamma} \rndBrk{\Lambda\rndBrk{\leq \epsilon | \gamma, x_\gamma , \theta_\gamma}_{A_\gamma} \tilde{\rho}^{(\gamma)}_{A_\gamma A_{\bar{\gamma}}| x_1^{n+m} \theta_1^{n+m}}} \\
        &\leq 2 \sum_{\gamma, x_{\bar{\gamma}}, \theta_{\bar{\gamma}}} p(\gamma) p(x_{\bar{\gamma}}, \theta_{\bar{\gamma}}) \ket{\gamma, x_{\bar{\gamma}}, \theta_{\bar{\gamma}}} \bra{\gamma, x_{\bar{\gamma}}, \theta_{\bar{\gamma}}} \otimes \\
        &\qquad \qquad \bigg[\sum_{x_\gamma, \theta_\gamma} p(x_\gamma, \theta_\gamma) \tr_{A_\gamma} \rndBrk{\Lambda\rndBrk{\leq \epsilon | \gamma, x_\gamma , \theta_\gamma}_{A_\gamma} \hat{P}^{ x_\gamma, \theta_\gamma}_{A_\gamma} \tilde{\rho}^{(\gamma)}_{A_\gamma A_{\bar{\gamma}}| x_1^{n+m} \theta_1^{n+m}} \hat{P}^{ x_\gamma, \theta_\gamma}_{A_\gamma}} \\
        &\qquad \qquad + \sum_{x_\gamma, \theta_\gamma} p(x_\gamma, \theta_\gamma) \tr_{A_\gamma} \rndBrk{\Lambda\rndBrk{\leq \epsilon | \gamma, x_\gamma , \theta_\gamma}_{A_\gamma} \hat{P}^{\perp|  x_\gamma, \theta_\gamma}_{A_\gamma} \tilde{\rho}^{(\gamma)}_{A_\gamma A_{\bar{\gamma}}| x_1^{n+m} \theta_1^{n+m}} \hat{P}^{\perp|  x_\gamma, \theta_\gamma}_{A_\gamma}}\bigg] \\
        &\leq 2 \sum_{\gamma, x_{\bar{\gamma}}, \theta_{\bar{\gamma}}} p(\gamma) p(x_{\bar{\gamma}}, \theta_{\bar{\gamma}}) \ket{\gamma, x_{\bar{\gamma}}, \theta_{\bar{\gamma}}} \bra{\gamma, x_{\bar{\gamma}}, \theta_{\bar{\gamma}}} \otimes \\
        &\qquad \qquad \sum_{x_\gamma, \theta_\gamma} p(x_\gamma, \theta_\gamma) \tr_{A_\gamma} \rndBrk{\Lambda\rndBrk{\leq \epsilon | \gamma, x_\gamma , \theta_\gamma}_{A_\gamma} \hat{P}^{ x_\gamma, \theta_\gamma}_{A_\gamma} \tilde{\rho}^{(\gamma)}_{A_\gamma A_{\bar{\gamma}}| x_1^{n+m} \theta_1^{n+m}} \hat{P}^{ x_\gamma, \theta_\gamma}_{A_\gamma}} \\
        &\qquad \qquad + 2\xi \mu_{\Gamma X_{\bar{\Gamma}} \Theta_{\bar{\Gamma}} A_{\bar{\Gamma}}}
        \numberthis
    \end{align*}
    where we have used the pinching inequality (see, for example \cite[Section 2.6.3]{TomamichelBook16}) in the second line, defined the state $\mu_{\Gamma X_{\bar{\Gamma}} \Theta_{\bar{\Gamma}} A_{\bar{\Gamma}}}$ as the normalization of the state
    \begin{align*}
        &\sum_{\gamma, x_{\bar{\gamma}}, \theta_{\bar{\gamma}}} p(\gamma) p(x_{\bar{\gamma}}, \theta_{\bar{\gamma}}) \ket{\gamma, x_{\bar{\gamma}}, \theta_{\bar{\gamma}}} \bra{\gamma, x_{\bar{\gamma}}, \theta_{\bar{\gamma}}} \otimes \\
        &\qquad \qquad \sum_{x_\gamma, \theta_\gamma} p(x_\gamma, \theta_\gamma) \tr_{A_\gamma} \rndBrk{\Lambda\rndBrk{\leq \epsilon | \gamma, x_\gamma , \theta_\gamma}_{A_\gamma} \hat{P}^{\perp|  x_\gamma, \theta_\gamma}_{A_\gamma} \tilde{\rho}^{(\gamma)}_{A_\gamma A_{\bar{\gamma}}| x_1^{n+m} \theta_1^{n+m}} \hat{P}^{\perp|  x_\gamma, \theta_\gamma}_{A_\gamma}}
        \numberthis
    \end{align*}
    and used 
    \begin{align*}
        &\tr\bigg( \sum_{\gamma, x_{\bar{\gamma}}, \theta_{\bar{\gamma}}} p(\gamma) p(x_{\bar{\gamma}}, \theta_{\bar{\gamma}}) \ket{\gamma, x_{\bar{\gamma}}, \theta_{\bar{\gamma}}} \bra{\gamma, x_{\bar{\gamma}}, \theta_{\bar{\gamma}}} \otimes \\
        &\qquad \qquad \sum_{x_\gamma, \theta_\gamma} p(x_\gamma, \theta_\gamma) \tr_{A_\gamma} \rndBrk{\Lambda\rndBrk{\leq \epsilon | \gamma, x_\gamma , \theta_\gamma}_{A_\gamma} \hat{P}^{\perp|  x_\gamma, \theta_\gamma}_{A_\gamma} \tilde{\rho}^{(\gamma)}_{A_\gamma A_{\bar{\gamma}}| x_1^{n+m} \theta_1^{n+m}} \hat{P}^{\perp|  x_\gamma, \theta_\gamma}_{A_\gamma}}\bigg) \\
        &= \sum_{\gamma, x_{\bar{\gamma}}, \theta_{\bar{\gamma}}} p(\gamma) p(x_{\bar{\gamma}}, \theta_{\bar{\gamma}}) \sum_{x_\gamma, \theta_\gamma} p(x_\gamma, \theta_\gamma) \tr \rndBrk{\Lambda\rndBrk{\leq \epsilon | \gamma, x_\gamma , \theta_\gamma}_{A_\gamma} \hat{P}^{\perp|  x_\gamma, \theta_\gamma}_{A_\gamma} \tilde{\rho}^{(\gamma)}_{A_\gamma A_{\bar{\gamma}} | x_1^{n+m} \theta_1^{n+m}} \hat{P}^{\perp|  x_\gamma, \theta_\gamma}_{A_\gamma}} \\
        &= \sum_{\gamma} p(\gamma) \sum_{x_\gamma, \theta_\gamma} p(x_\gamma, \theta_\gamma) \tr \rndBrk{\Lambda\rndBrk{\leq \epsilon | \gamma, x_\gamma , \theta_\gamma}_{A_\gamma} \hat{P}^{\perp|  x_\gamma, \theta_\gamma}_{A_\gamma} \rndBrk{\sum_{x_{\bar{\gamma}}, \theta_{\bar{\gamma}}} p(x_{\bar{\gamma}}, \theta_{\bar{\gamma}}) \tilde{\rho}^{(\gamma)}_{A_\gamma | x_1^{n+m} \theta_1^{n+m}}} \hat{P}^{\perp|  x_\gamma, \theta_\gamma}_{A_\gamma}} \\
        &\leq \xi,
        \numberthis
    \end{align*}
    which follows from our assumption about the measurements (Eq. \ref{eq:assmpn_imp_meas}). Therefore, we have 
    \begin{align*}
        \tilde{\rho}'_{{\Gamma} X_{\bar{{\Gamma}}} \Theta_{\bar{{\Gamma}}} A_{\bar{{\Gamma}}}  \wedge \Omega_{\text{im}}} &\leq 2 \sum_{\gamma, x_{\bar{\gamma}}, \theta_{\bar{\gamma}}} p(\gamma) p(x_{\bar{\gamma}}, \theta_{\bar{\gamma}}) \ket{\gamma, x_{\bar{\gamma}}, \theta_{\bar{\gamma}}} \bra{\gamma, x_{\bar{\gamma}}, \theta_{\bar{\gamma}}} \otimes \\
        &\qquad \qquad \sum_{x_\gamma, \theta_\gamma} p(x_\gamma, \theta_\gamma) \tr_{A_\gamma} \rndBrk{\Lambda\rndBrk{\leq \epsilon | \gamma, x_\gamma , \theta_\gamma}_{A_\gamma} \hat{P}^{ x_\gamma, \theta_\gamma}_{A_\gamma} \tilde{\rho}^{(\gamma)}_{A_\gamma A_{\bar{\gamma}}| x_1^{n+m} \theta_1^{n+m}} \hat{P}^{ x_\gamma, \theta_\gamma}_{A_\gamma}} \\
        &\qquad \qquad+ 2\xi \mu_{\Gamma X_{\bar{\Gamma}} \Theta_{\bar{\Gamma}} A_{\bar{\Gamma}}}\\
        &\leq 2 \sum_{\gamma, x_{\bar{\gamma}}, \theta_{\bar{\gamma}}} p(\gamma) p(x_{\bar{\gamma}}, \theta_{\bar{\gamma}}) \ket{\gamma, x_{\bar{\gamma}}, \theta_{\bar{\gamma}}} \bra{\gamma, x_{\bar{\gamma}}, \theta_{\bar{\gamma}}} \otimes \\
        &\qquad \qquad \sum_{x_\gamma, \theta_\gamma} p(x_\gamma, \theta_\gamma) \tr_{A_\gamma} \rndBrk{\hat{P}^{ x_\gamma, \theta_\gamma}_{A_\gamma} \tilde{\rho}^{(\gamma)}_{A_\gamma A_{\bar{\gamma}}| x_1^{n+m} \theta_1^{n+m}}} + 2\xi \mu_{\Gamma X_{\bar{\Gamma}} \Theta_{\bar{\Gamma}} A_{\bar{\Gamma}}}\\
        & = 2 \bar{\tilde{\rho}}_{\Gamma X_{\bar{\Gamma}} \Theta_{\bar{\Gamma}} A_{\bar{\Gamma}} \wedge \Omega}^{(\epsilon + \epsilon_m)} + 2\xi \mu_{\Gamma X_{\bar{\Gamma}} \Theta_{\bar{\Gamma}} A_{\bar{\Gamma}}}
        \numberthis
    \end{align*}
    where the state $\bar{\tilde{\rho}}_{\Gamma X_{\bar{\Gamma}} \Theta_{\bar{\Gamma}} A_{\bar{\Gamma}} \wedge \Omega}^{(\epsilon + \epsilon_m)}$ is the state produced when the perfect measurement is used to measure $A_\gamma$ and condition the state $\tilde{\rho}_{{\Gamma} X_1^{n+m} \Theta_1^{n+m} A_1^{n+m}}$ on the event that the relative weight of the measured results is lesser than $\epsilon+\epsilon_m$ from the string contained in $X_{\gamma}$. This is the state, which was used in the previous section to derive the smooth max-relative entropy bound. The only difference being that the threshold for the relative weight of the perfect measurement in the last section was $\epsilon$. Thus, we can use the previously derived bound in Eq. \ref{eq:conditioned_rho_tilde_Dmax} for this state by simply replacing $\epsilon$ with $\epsilon+\epsilon_m$. Relabelling the remaining registers between $1$ and $n$, tracing over the $\Gamma$ register and using the Eq. \ref{eq:conditioned_rho_tilde_Dmax}, we get 
    \begin{align*}
        \tilde{\rho}'_{X_1^n \Theta_1^n A_1^n  \wedge \Omega_{\text{im}}} &\leq 2 \bar{\tilde{\rho}}_{X_1^n \Theta_1^n A_1^n  \wedge \Omega}^{(\epsilon + \epsilon_m)} + 2\xi \mu_{ X_1^n \Theta_1^n A_1^n } \\
        &\leq 2^{n h(\epsilon+\epsilon_m +\delta) + 1}\rndBrk{\hat{\rho}_{X \Theta A}^{(\epsilon+\epsilon_m +\delta)}}^{\otimes n} + 2\xi \mu_{X_1^n \Theta_1^n A_1^n } \numberthis
    \end{align*}
    where $\hat{\rho}_{X \Theta A}^{(\epsilon+\epsilon_m +\delta)} := (1-2(\epsilon+\epsilon_m +\delta))\hat{\rho}_{X \Theta A} + 2(\epsilon+\epsilon_m +\delta)\hat{\rho}_{X \Theta}\otimes \tau_A$. As before using the data processing inequality, we have 
    \begin{align}
        \frac{1}{2} \norm{{\rho}'_{X_1^n \Theta_1^n A_1^n \wedge \Omega_{\text{im}}} - \tilde{\rho}'_{X_1^n \Theta_1^n A_1^n  \wedge \Omega_{\text{im}}}}_1 \leq \epsilon^\delta_{\text{qu}}.
    \end{align}
    Once again following the argument in \cite[Lemma G.1]{Marwah23}, the conditional states satisfy
    \begin{align}
        \frac{1}{2} \norm{{\rho}'_{X_1^n \Theta_1^n A_1^n | \Omega_{\text{im}}} -\tilde{\rho}'_{X_1^n \Theta_1^n A_1^n | \Omega_{\text{im}}}}_1 \leq \frac{2\epsilon^\delta_{\text{qu}}}{\Pr_{\rho}(\Omega_{\text{im}})}
    \end{align}
    for $\Pr_{\rho}(\Omega_{\text{im}}) := \tr\rndBrk{{\rho}'_{X_1^n \Theta_1^n A_1^n \wedge \Omega_{\text{im}}}}$, defined as the probability that the Protocol \ref{frame:src_corr_prot} does not abort with the imperfect measurements and 
    \begin{align}
        \tilde{\rho}'_{X_1^n \Theta_1^n A_1^n | \Omega_{\text{im}}} &\leq \frac{2^{n h(\epsilon+\epsilon_m +\delta) + 1}}{\Pr_{\tilde{\rho}}( \Omega_{\text{im}})}\rndBrk{\hat{\rho}_{X \Theta A}^{(\epsilon+\epsilon_m +\delta)}}^{\otimes n} + \frac{4 \xi}{{\Pr_{\tilde{\rho}}( \Omega_{\text{im}})}} \frac{\mu_{X_1^n \Theta_1^n A_1^n}}{2}.
        \label{eq:meas_tilde_operator_ineq}
    \end{align}
    where $\Pr_{\tilde{\rho}}(\Omega_{\text{im}}) := \tr\rndBrk{{\tilde{\rho}}'_{X_1^n \Theta_1^n A_1^n \wedge \Omega_{\text{im}}}}$. For $0< \mu <1 $, the hypothesis testing relative entropy \cite{Wang12} is defined as 
    \begin{align}
        D_h^{\mu}(\rho || \sigma) := -\inf \curlyBrk{\log \tr(\sigma Q): 0\leq \mu Q \leq \Id,\text{ and }  \tr(\rho Q)\geq 1}.
    \end{align}
    Equivalently, using semidefinite programming duality (see \cite{JW-Adv-QIT_HypTest}) it can be shown that 
    \begin{align}
        D_h^{\mu}(\rho || \sigma) &= -\sup \curlyBrk{\log(\lambda - \tr(Y)): Y\geq 0, \lambda \geq 0, \text{ and } \lambda \rho \leq \sigma + \mu Y}\\
        &= \inf\curlyBrk{\log \lambda' - \log (1- \tr(Z)): Z\geq 0, \lambda' \geq 0, \text{ and } \rho \leq \lambda' \sigma + \mu Z}.
    \end{align}
    Thus, Eq. \ref{eq:meas_tilde_operator_ineq} implies 
    \begin{align}
        D_h^{\mu}(\tilde{\rho}'_{X_1^n \Theta_1^n A_1^n | \Omega_{\text{im}}}|| (\hat{\rho}_{X \Theta A}^{(\epsilon+\epsilon_m +\delta)})^{\otimes n}) \leq n h(\epsilon+\epsilon_m +\delta) + 2 + \log\frac{1}{\Pr_{\tilde{\rho}}( \Omega_{\text{im}})}
    \end{align}
    for $\mu := \frac{4\xi}{{\Pr_{\tilde{\rho}}( \Omega_{\text{im}})}}$. Using \cite[Theorem 5.11]{JW-Adv-QIT_HypTest} (originally proven in \cite{Anshu19}), this implies that\footnote{The smoothing for $D^{\epsilon}_{\max}(\rho || \sigma)$ in \cite{JW-Adv-QIT_HypTest} is defined using the trace distance instead of purified distance, which we use here. It can, however, be verified that the proof there also works with purified distance.}
    \begin{align}
        D_{\max}^{\sqrt{\mu}} (\tilde{\rho}'_{X_1^n \Theta_1^n A_1^n | \Omega_{\text{im}}}|| (\hat{\rho}_{X \Theta A}^{(\epsilon+\epsilon_m +\delta)})^{\otimes n}) \leq n h(\epsilon+\epsilon_m +\delta) + 2 + \log\frac{1}{\Pr_{\tilde{\rho}}( \Omega_{\text{im}})} + \log\frac{1}{\mu (1- \mu)}
    \end{align}
    Using the triangle inequality, we can state this in terms of the real state $\tilde{\rho}'_{X_1^n \Theta_1^n A_1^n | \Omega_{\text{im}}}$
    \begin{align*}
        D_{\max}^{\epsilon_{f}} (\rho'_{X_1^n \Theta_1^n A_1^n | \Omega_{\text{im}}}|| \rndBrk{\hat{\rho}_{X \Theta A}^{(\epsilon+\epsilon_m +\delta)}}^{\otimes n}) &\leq n h(\epsilon+\epsilon_m +\delta) + 2 + \log\frac{1}{\Pr_{\rho}( \Omega_{\text{im}}) - \epsilon^{\delta}_{\text{qu}}} \\
        &+ \log\frac{1}{4\xi (\Pr_{\rho}( \Omega_{\text{im}}) - \epsilon^{\delta}_{\text{qu}} - 4\xi)}
        \numberthis
    \end{align*}
    for $\epsilon_{f} := \frac{2\xi^{1/2}}{\sqrt{\Pr_{{\rho}}( \Omega_{\text{im}})- \epsilon^{\delta}_{\text{qu}}}} + 2\sqrt{\frac{\epsilon^{\delta}_{\text{qu}}}{\Pr_{\rho}( \Omega_{\text{im}})}}$. Note that if $\xi = \exp(-\Omega(m))$, then the last term in the bound above adds $O(m)$ to the smooth max-relative entropy, so it cannot be chosen to be too small (This seems to be an artifact of the bound in \cite[Theorem 5.11]{JW-Adv-QIT_HypTest}, and it seems that it should be possible to improve this dependence).
\end{proof}
 
\section{Discussion and future work}
\label{sec:src_cor_disc}

We demonstrate a general method to reduce the security of the BB84 protocol with an imperfect source with source correlations to that of the BB84 protocol with an almost perfect source. In order to minimise the rate loss and privacy amplification error, we use a source test to test the output of the imperfect source before using it for the QKD protocol. Theorem \ref{th:src_cor_Hmin_bd} gives a simple bound on the smooth min-entropy for the BB84 protocol which uses the output of the source test. According to this bound, for a source error of $\epsilon$, the rate of the QKD protocol decreases by $ O((\epsilon \log\frac{1}{\epsilon})^{1/2})$ and the privacy amplification error can be made arbitrarily small assuming perfect measurements are used for the source test. With imperfect measurements, satisfying a very broad assumption, we showed that the rate decrease is similar to the perfect case and the privacy amplification error depends on an error parameter of the measurements. This error parameter too can be made arbitrarily small under further reasonable physical assumptions, like independence of the measurement errors or almost perfect behaviour given a sufficient relaxation time. We leave the details of such a physical model and its relation to our assumption on the measurements for future work. It should be noted that one could also place physical assumptions on the source, which would guarantee that it passes the source test and hence imply security for the protocol. Further, if the source can be guaranteed to pass the source test with a high probability (which can be made arbitrarily close to $1$), say $1- \epsilon_s$, then the source test need not even be performed before the QKD protocol. The error $\epsilon_s$ can simply be added to the QKD security parameter.\\

While we have provided an effective and general method for proving security under source correlations for a theoretical BB84 protocol, further work is required for handling source correlations for practical QKD protocols. This includes extending our analysis to protocols used in practice, and incorporating other known device imperfections \cite{Pereira22}. As mentioned at the end of Section \ref{sec:sec_proof_src_corr}, further work is also needed to understand the behaviour of the rate loss on the protocol parameters.

\section*{Acknowledgments}
We would like to thank Ernest Tan and Shlok Ashok Nahar, who explained the source correlation problem for QKD to us during and after the QKD Security Proof Workshop at the Institute of Quantum Computing, Waterloo and also for their comments on a manuscript of this paper. AM was supported by the J.A. DeS\`eve Foundation and by bourse d'excellence Google. This work was also supported by the Natural Sciences and Engineering Research Council of Canada.

\appendix

\addcontentsline{toc}{section}{APPENDICES}
\section*{APPENDICES}

\section{Proof of Theorem \ref{th:src_cor_Hmin_bd}}
\label{sec:formal_src_qkd_pf}

In this section, we formally prove the lower bound on the smooth min-entropy required for the security of QKD in Theorem \ref{th:src_cor_Hmin_bd} using the entropy accumulation theorem (EAT). In Section \ref{sec:sec_proof_src_corr} (Eq. \ref{eq:src_corr_dist_bd} and \ref{eq:src_corr_Dmax_bd}), we showed that $\rho'_{X_1^n \Theta_1^n A_1^n} := \bar{\tilde{\rho}}_{X_1^n \Theta_1^n A_1^n|\Omega}$ and $\sigma_{X_1^n \Theta_1^n A_1^n} = \rndBrk{\hat{\rho}_{X \Theta A}^{(\epsilon +\delta)}}^{\otimes n}$ is such that
\begin{align}
    \frac{1}{2}\norm{\rho'_{X_1^n \Theta_1^n A_1^n} - \bar{\rho}_{X_1^n \Theta_1^n A_1^n | \Omega}}_1 \leq \frac{\epsilon_f^2}{2}
    \label{eq:rho_prime_dist_bd}
\end{align}
and
\begin{align}
    D_{\max} (\rho'_{X_1^{n} \Theta_1^{n} A_1^{n} } || \sigma_{X_1^n \Theta_1^n A_1^n})  \leq n h(\epsilon + \delta) + \log \frac{1}{\Pr_{{\rho}} (\Omega) - \epsilon_{\text{qu}}^{\delta}}.
    \label{eq:rho_prime_Dmax_bound}
\end{align}
Fix an arbitrary strategy for Eve. Let $\Phi_{\text{QKD}} : X_1^n \Theta_1^n A_1^n \rightarrow X_1^n Y_1^n \hat{X}_S \hat{C}_1^n \Theta_1^n \hat{\Theta}_1^n S T E$ be the map applied by Alice, Bob and Eve on the states produced by Alice during the QKD protocol. In order to prove security for the BB84 protocol, we need a lower bound on the following smooth min-entropy of $\Phi_{\text{QKD}} (\bar{\rho})$
\begin{align*}
    H^{\nu}_{\min} (X_S | E T \Theta_1^n \hat{\Theta}_1^n)_{\Phi_{\text{QKD}} (\bar{\rho})_{|\Upsilon}}
\end{align*}
for some $\nu \geq 0$. In \cite[Appendix A]{Metger22-2}, it is shown that it is sufficient to show a lower bound for the smooth min-entropy of the final state of the protocol conditioned on the event $\Upsilon''$ when the protocol uses perfect source states. The arguments mentioned there are also valid for our case, which is why we bound the smooth min-entropy
\begin{align*}
    H^{\nu}_{\min} (X_S | E T \Theta_1^n \hat{\Theta}_1^n)_{\Phi_{\text{QKD}} (\bar{\rho})_{|\Upsilon''}}
\end{align*}
in Theorem \ref{th:src_cor_Hmin_bd}\footnote{\label{fn:obs_sec_cond}The arguments in \cite[Appendix A]{Metger22-2} can also be modified to show that it is sufficient to show that $P(\Upsilon'') \norm{\rho^f_{K_A E'} - \tau_{K_A} \otimes \rho^f_{E'}}_1$ is small, where $K_A$ is Alice's key and $\rho^f$ is the state produced at the end of the protocol conditioned on not aborting, to prove the security of QKD.}. \\

Using the data processing inequality and Eq. \ref{eq:rho_prime_Dmax_bound}, we see that
\begin{align}
    D_{\max} (\Phi_{\text{QKD}}(\rho'_{X_1^{n} \Theta_1^{n} A_1^{n} }) || \Phi_{\text{QKD}}(\sigma_{X_1^n \Theta_1^n A_1^n}))  \leq n h(\epsilon + \delta) + \log \frac{1}{\Pr_{{\rho}} (\Omega) - \epsilon_{\text{qu}}^{\delta}}.
    \label{eq:src_corr_tilde_rho_sigma_Dmax_bd}
\end{align}
Note that $\Phi_{\text{QKD}} (\rho'_{X_1^n \Theta_1^n A_1^n})$ and $\Phi_{\text{QKD}}(\sigma_{X_1^n \Theta_1^n A_1^n})$ are the states that are produced at the end of the protocol if Alice's source were to produce the states $\rho'_{X_1^n \Theta_1^n A_1^n}$ and $\sigma_{X_1^n \Theta_1^n A_1^n}$ respectively. The states $\Phi_{\text{QKD}} (\rho'_{X_1^n \Theta_1^n A_1^n})$ and $\Phi_{\text{QKD}}(\sigma_{X_1^n \Theta_1^n A_1^n})$ also contain all the corresponding classical variables as the real protocol state $\Phi_{\text{QKD}} (\bar{\rho}_{X_1^n \Theta_1^n A_1^n | \Omega})$. In particular, the event $\Upsilon''$ is well-defined (defined using classical variables) for both of these states. \\

Using \cite[Lemma G.1]{Marwah23} and Eq. \ref{eq:rho_prime_dist_bd}, we have that the final states conditioned on the event $\Upsilon''$ satisfy
\begin{align}
    \frac{1}{2}\norm{\Phi_{\text{QKD}} (\bar{\rho}_{X_1^n \Theta_1^n A_1^n})_{|\Omega \wedge \Upsilon''} - \Phi_{\text{QKD}} (\rho'_{X_1^n \Theta_1^n A_1^n})_{|\Upsilon''}}_1 \leq \frac{\epsilon_f^2}{\Pr_{\bar{\rho}}(\Upsilon''| \Omega)}
\end{align}
where $\Pr_{\bar{\rho}}(\Upsilon''| \Omega)$ is the probability for the event $\Upsilon''$ for the state $\Phi_{\text{QKD}} (\bar{\rho}_{X_1^n \Theta_1^n A_1^n| \Omega})$\footnote{We abuse notation while writing the probability this way since the state it is evaluated on is $\Phi_{\text{QKD}} (\bar{\rho}_{X_1^n \Theta_1^n A_1^n | \Omega})$, while we simply use the subscripts $\bar{\rho}$ for $P$. We also write probabilities this way for the state $\Phi_{\text{QKD}} (\rho'_{X_1^n \Theta_1^n A_1^n})$ and $\Phi_{\text{QKD}} ({\sigma}_{X_1^n \Theta_1^n A_1^n})$. This is done for the sake of clarity.}. Using the Fuchs-van de Graaf inequality \cite[Lemma 3.5]{TomamichelBook16}, we can transform this to a purified distance bound
\begin{align}
    P(\Phi_{\text{QKD}} ({\bar{\rho}}_{X_1^n \Theta_1^n A_1^n})_{|\Omega \wedge \Upsilon''}, \Phi_{\text{QKD}} (\rho'_{X_1^n \Theta_1^n A_1^n})_{|\Upsilon''}) \leq \sqrt{\frac{2}{\Pr_{\bar{\rho}}(\Upsilon''| \Omega)}}\epsilon_f.
    \label{eq:dist_bd_after_cond}
\end{align}
Let $d:= D_{\max} (\Phi_{\text{QKD}} (\rho'_{X_1^n \Theta_1^n A_1^n}) || \Phi_{\text{QKD}}(\sigma_{X_1^n \Theta_1^n A_1^n}))$. We have proven an upper bound on $d$ in Eq. \ref{eq:src_corr_tilde_rho_sigma_Dmax_bd}. By definition of $D_{\max}$, we have
\begin{align*}
    \Phi_{\text{QKD}} (\rho'_{X_1^n \Theta_1^n A_1^n}) \leq 2^{d} \Phi_{\text{QKD}}(\sigma_{X_1^n \Theta_1^n A_1^n}).
    \numberthis
\end{align*}
Conditioning both sides on the event $\Upsilon''$ implies that 
\begin{align*}
    \Pr_{\rho'}(\Upsilon'') \Phi_{\text{QKD}} (\rho'_{X_1^n \Theta_1^n A_1^n})_{|\Upsilon''} \leq 2^{d} \Pr_{\sigma}(\Upsilon'') \Phi_{\text{QKD}}(\sigma_{X_1^n \Theta_1^n A_1^n})_{|\Upsilon''}
    \numberthis
\end{align*}
where $\Pr_{\rho'}(\Upsilon'')$ and $\Pr_{\sigma}(\Upsilon'')$ are the probability for $\Upsilon''$ for the states $\Phi_{\text{QKD}} (\rho'_{X_1^n \Theta_1^n A_1^n})$ and $\Phi_{\text{QKD}}(\sigma_{X_1^n \Theta_1^n A_1^n})$ respectively. Therefore, we have 
\begin{align*}
    D_{\max}(\Phi_{\text{QKD}} (\rho'_{X_1^n \Theta_1^n A_1^n})_{|\Upsilon''} || \Phi_{\text{QKD}}(\sigma_{X_1^n \Theta_1^n A_1^n})_{|\Upsilon''}) \leq d + \log \frac{\Pr_{\sigma}(\Upsilon'')}{\Pr_{\rho'}(\Upsilon'')}.
    \numberthis
\end{align*}
Together, with Eq. \ref{eq:dist_bd_after_cond} for $\epsilon_{\text{pa}} := \rndBrk{\frac{2}{\Pr_{\bar{\rho}}(\Upsilon''| \Omega)}}^{\frac{1}{2}} \epsilon_f$, we have that 
\begin{align}
    D^{\epsilon_{\text{pa}}}_{\max} (\Phi_{\text{QKD}} (\bar{\rho}_{X_1^n \Theta_1^n A_1^n})_{|\Omega \wedge \Upsilon''} || \Phi_{\text{QKD}}(\sigma_{X_1^n \Theta_1^n A_1^n})_{|\Upsilon''}) \leq d + \log \frac{\Pr_{\sigma}(\Upsilon'')}{\Pr_{\rho'}(\Upsilon'')}. 
\end{align}
Let $\epsilon_1, \epsilon_2, \epsilon_3 > 0$ be arbitrary parameters. We have
\begin{align*}
    H&^{\epsilon_{\text{pa}} + \epsilon_1 + 2(\epsilon_2 + \epsilon_3)}_{\min} (X_S | E \Theta_1^n \hat{\Theta}_1^n T)_{\Phi_{\text{QKD}} (\bar{\rho})_{|\Omega \wedge \Upsilon''}} \\
    &= H^{\epsilon_{\text{pa}} + \epsilon_1 + 2(\epsilon_2 + \epsilon_3)}_{\min} (\bar{X}_1^n | E \Theta_1^n \hat{\Theta}_1^n T)_{\Phi_{\text{QKD}} (\bar{\rho})_{|\Omega \wedge \Upsilon''}} \\
    &\geq H^{\epsilon_{\text{pa}} + \epsilon_1}_{\min} (\bar{X}_1^n \bar{Y}_1^n | E \Theta_1^n \hat{\Theta}_1^n T)_{\Phi_{\text{QKD}} (\bar{\rho})_{|\Omega \wedge \Upsilon''}} - H^{\epsilon_2}_{\max}(\bar{Y}_1^n | \bar{X}_1^n E \Theta_1^n \hat{\Theta}_1^n T)_{\Phi_{\text{QKD}} (\bar{\rho})_{|\Omega \wedge \Upsilon''}} - 3 g_0(\epsilon_3) \\
    &\geq H^{\epsilon_{\text{pa}} + \epsilon_1}_{\min} (\bar{X}_1^n \bar{Y}_1^n | E \Theta_1^n \hat{\Theta}_1^n)_{\Phi_{\text{QKD}} (\bar{\rho})_{|\Omega \wedge \Upsilon''}} - \log|T| - H^{\epsilon_2}_{\max}(\bar{Y}_1^n | \bar{X}_1^n E \Theta_1^n \hat{\Theta}_1^n T)_{\Phi_{\text{QKD}} (\bar{\rho})_{|\Omega \wedge \Upsilon''}} - 3 g_0(\epsilon_3) \numberthis
    \label{eq:src_corr_Hmin_chain_rule_step}
\end{align*}
where in the first line we have used the fact that given $\Theta_1^n$ and $\hat{\Theta}_1^n$, one can figure out the set $S$ and then $\bar{X}_1^n = X_S (\perp)_{S^c}$ (see Table \ref{tab:var_defn} for definition of the registers), in the second line we have used the chain rule for smooth min-entropy \cite[Theorem 15]{Vitanov13} and in the last line we have used the dimension bound. We have used the chain rule here to reduce our proof to bounding an entropy, which in the perfect source case, can be bound using entropy accumulation \cite[Section 5.1]{Dupuis20}. \\

Now, we can use Lemma \ref{lemm:Hmin_rho_to_Halpha_sigma_using_Dmax} to derive
\begin{align*}
    H^{\epsilon_{\text{pa}} + \epsilon_1}_{\min} &(\bar{X}_1^n \bar{Y}_1^n | E \Theta_1^n \hat{\Theta}_1^n)_{\Phi_{\text{QKD}} (\bar{\rho})_{|\Omega \wedge \Upsilon''}}\\
    &\geq \tilde{H}^{\uparrow}_{\alpha}(\bar{X}_1^n \bar{Y}_1^n | E \Theta_1^n \hat{\Theta}_1^n)_{\Phi_{\text{QKD}}(\sigma)_{|\Upsilon''}} \\
    & \qquad -\frac{\alpha}{\alpha-1} D^{\epsilon_{\text{pa}}}_{\max} (\Phi_{\text{QKD}} (\bar{\rho}_{X_1^n \Theta_1^n A_1^n})_{|\Omega \wedge \Upsilon''} || \Phi_{\text{QKD}}(\sigma_{X_1^n \Theta_1^n A_1^n})_{|\Upsilon''}) - \frac{g_1(\epsilon_1, \epsilon_{\text{pa}})}{\alpha-1} \\
    &\geq \tilde{H}^{\uparrow}_{\alpha}(\bar{X}_1^n \bar{Y}_1^n| E \Theta_1^n \hat{\Theta}_1^n)_{\Phi_{\text{QKD}}(\sigma)_{|\Upsilon''}} \\
    &\qquad - \frac{\alpha}{\alpha-1}d - \frac{\alpha}{\alpha-1} \log \frac{\Pr_{\sigma}(\Upsilon'')}{\Pr_{\rho'}(\Upsilon'')} - \frac{g_1(\epsilon_1, \epsilon_{\text{pa}})}{\alpha-1} \numberthis
    \label{eq:real_st_smooth_min_lower_bd}
\end{align*}
Thus, we have reduced the problem to lower bounding $\alpha$-R\'enyi conditional entropy for the QKD protocol in Protocol \ref{frame:BB84}, where Alice's source produces noisy BB84 states. We can bound this conditional entropy using the entropy accumulation theorem. The only difference in the following arguments from \cite[Section 5.1]{Dupuis20} is that we need to employ entropy accumulation for $\alpha$-R\'enyi entropies (also see \cite{George22}).\\ 

Firstly, note that we can use source purification for the state $\Phi_{\text{QKD}}(\sigma)$, that is, we can imagine that the state $\Phi_{\text{QKD}}(\sigma)$ was produced by the following procedure: 
\begin{enumerate}
    \item Alice prepares $n$ Bell states $(\Phi^{+})^{\otimes n}_{\bar{A}_1^n A_1^n}$.
    \item For each $i \in [n]$, Alice measures the qubit $\bar{A}_i$ in the basis $\Theta_i$, which is chosen to be $Z$ with probability $(1-\mu)$ and otherwise is chosen to be $X$. The measurement result is labelled $X_i$. 
    \item She then applies the $2(\epsilon+ \delta)-$depolarising channel to each of the qubits $A_i$ for $i \in [n]$ and sends them over the channel to Bob.
\end{enumerate}
We can imagine that the source state is prepared in this fashion. The initial state for EAT will be represented by the registers $\bar{A}_1^n A_1^n E$, which contain the state produced after Eve forwards the state produced above by Alice to Bob. We can now define the EAT maps $\mathcal{M}_i: \bar{A}_{i}^n A_{i}^n \rightarrow \bar{A}_{i+1}^n A_{i+1}^n \bar{X}_i \bar{Y}_i \Theta_i \hat{\Theta}_i C_i$, where the registers $\Theta_i$ and $ \hat{\Theta}_i$ are produced by randomly sampling according to the probabilities in the protocol, $\bar{X}_i$ and $\bar{Y}_i$ are produced according to the measurements chosen in the protocol and the source preparation procedure above, and $C_i$ is defined as in Table \ref{tab:var_defn}. \\

Note that by conditioning on the event $\Upsilon''$, we are requiring that $q= \text{freq}(C_1^n)$ satisfies $q(1)\leq e \mu^2$. It is shown in \cite[Proof of Claim 5.2]{Dupuis20} that there exists an affine min-tradeoff function $f$, such that $C_1^n$ given $\Upsilon''$ satisfies $f(\text{freq}(C_1^n)) \geq 1- 2\mu + \mu^2 - h(e)$. Using the entropy accumulation theorem \cite[Proposition 4.5]{Dupuis20}, we get 
\begin{align}
    \tilde{H}^{\uparrow}_{\alpha}(\bar{X}_1^n \bar{Y}_1^n | E \Theta_1^n \hat{\Theta}_1^n)_{\Phi_{\text{QKD}}(\sigma^\delta)_{|\Upsilon''}} \geq n(1- 2 \mu  + \mu^2 - h(e)) - n \frac{\alpha-1}{4}V^2 - \frac{\alpha}{\alpha-1} \log \frac{1}{\Pr_{\sigma}(\Upsilon'')}
    \label{eq:src_corr_Halpha_bd}
\end{align}
where $V:= 2\lceil \norm{\nabla f }_\infty \rceil + 2 \log(1+2|\mathcal{X}|^2) = \frac{2}{\mu^2}\log\frac{1-e}{e} + 2 \log(1+2|\mathcal{X}|^2)$\footnote{It should be noted that this term can be improved using \cite{Dupuis19}.} and $1< \alpha < 1+ \frac{2}{V}$. Combining 
Eq. \ref{eq:real_st_smooth_min_lower_bd} and \ref{eq:src_corr_Halpha_bd}, we get
\begin{align*}
    H^{\epsilon_{\text{pa}} + \epsilon_1}_{\min} &(\bar{X}_1^n \bar{Y}_1^n | E \Theta_1^n \hat{\Theta}_1^n)_{\Phi_{\text{QKD}} (\bar{\rho})_{|\Omega \wedge \Upsilon''}}\\
    & \geq n(1- 2 \mu + \mu^2  - h(e)) - n \frac{\alpha-1}{4}V^2 - \frac{\alpha}{\alpha-1}d - \frac{\alpha}{\alpha-1} \log \frac{1}{\Pr_{\rho'}(\Upsilon'')} - \frac{g_1(\epsilon_1, \epsilon_{\text{pa}})}{\alpha-1} \\
    & \geq n(1- 2 \mu + \mu^2  - h(e)) - n \frac{\alpha-1}{4}V^2- \frac{\alpha}{\alpha-1}n h(\epsilon + \delta) - \frac{\alpha}{\alpha-1} \log \frac{1}{\Pr_{{\rho}} (\Omega) - \epsilon_{\text{qu}}^{\delta}} \\
    &\qquad - \frac{\alpha}{\alpha-1} \log \frac{1}{\Pr_{\bar{\rho}}(\Upsilon''|\Omega) - \frac{2\epsilon^{\delta}_{\text{qu}}}{\Pr_{\rho}(\Omega)}} - \frac{g_1(\epsilon_1, \epsilon_{\text{pa}})}{\alpha-1} \\
    & \geq n(1- 2 \mu + \mu^2  - h(e)) - n \frac{\alpha-1}{4}V^2- \frac{\alpha}{\alpha-1}n h(\epsilon + \delta) \\
    &\qquad - \frac{\alpha}{\alpha-1} \rndBrk{\log \frac{1}{\Pr_{\rho}(\Omega \wedge \Upsilon'') - 2\epsilon^{\delta}_{\text{qu}}} + 1} - \frac{g_1(\epsilon_1, \epsilon_{\text{pa}})}{\alpha-1} \numberthis
\end{align*}
where we have used Eq. \ref{eq:rho_prime_dist_bd}, $\epsilon_f = 2\sqrt{\frac{\epsilon^{\delta}_{\text{qu}}}{\Pr_{\rho}(\Omega)}}$, $\Pr_{\rho}(\Omega \wedge \Upsilon'') = \Pr_{\rho}(\Omega)\Pr_{\bar{\rho}}(\Upsilon'' | \Omega)$ and $\Pr_{\rho}(\Omega) \geq \Pr_{\rho}(\Omega \wedge \Upsilon'') > 2\epsilon^{\delta}_{\text{qu}}$ to simplify the result. It should be noted that the probability $\Pr_{\sigma}(\Upsilon'')$ of the auxiliary state cancels out. Since, we restrict $\epsilon$ and $\delta$ to the region, where $h(\epsilon+\delta)< \frac{1}{\sqrt{2}}$, we can choose
\begin{align}
    \alpha & := 1+ \frac{2\sqrt{2h(\epsilon+\delta)}}{V}
\end{align}
which gives us the bound
\begin{align*}
    H^{\epsilon_{\text{pa}} + \epsilon_1}_{\min} &(\bar{X}_1^n \bar{Y}_1^n | E \Theta_1^n \hat{\Theta}_1^n)_{\Phi_{\text{QKD}} (\bar{\rho})_{|\Omega \wedge \Upsilon''}} \geq n(1- 2 \mu + \mu^2  - h(e) - V \sqrt{2h(\epsilon+\delta)}) \\
    &- \frac{V}{\sqrt{2h(\epsilon+\delta)}} \rndBrk{\log \frac{1}{\Pr_{\rho}(\Omega \wedge \Upsilon'') - 2\epsilon^{\delta}_{\text{qu}}} + 1} - \frac{g_1(\epsilon_1, \epsilon_{\text{pa}})}{2\sqrt{2h(\epsilon+\delta)}} V. \numberthis
    \label{eq:Hmin_AbarA_bd}
\end{align*}
We also need to bound $H^{\epsilon_2}_{\max}(\bar{Y}_1^n | \bar{X}_1^n E \Theta_1^n \hat{\Theta}_1^n T)_{\Phi_{\text{QKD}} (\rho)_{|\Omega \wedge \Upsilon''}}$ in Eq. \ref{eq:src_corr_Hmin_chain_rule_step}. The bound and the proof for this bound are the same as in \cite[Claim 5.2]{Dupuis20}. We have for $\beta \in (1,2)$ that 
\begin{align*}
    H^{\epsilon_2}_{\max}&(\bar{Y}_1^n | \bar{X}_1^n E \Theta_1^n \hat{\Theta}_1^n T)_{\Phi_{\text{QKD}} (\bar{\rho})_{|\Omega \wedge \Upsilon''}} \\
    &\leq H^{\epsilon_2}_{\max}(\bar{Y}_1^n | \Theta_1^n \hat{\Theta}_1^n)_{\Phi_{\text{QKD}} (\bar{\rho})_{|\Omega \wedge \Upsilon''}} \\
    &\leq \tilde{H}^{\downarrow}_{\frac{1}{\beta}}(\bar{Y}_1^n | \Theta_1^n \hat{\Theta}_1^n)_{\Phi_{\text{QKD}} (\bar{\rho})_{|\Omega \wedge \Upsilon''}} + \frac{g_0(\epsilon_2)}{\beta-1}\\ 
    &\leq \tilde{H}^{\downarrow}_{\frac{1}{\beta}}(\bar{Y}_1^n | \Theta_1^n \hat{\Theta}_1^n)_{\Phi_{\text{QKD}} (\bar{\rho})} + \frac{\beta}{\beta-1}\log\frac{1}{\Pr_{\bar{\rho}}(\Omega \wedge \Upsilon'')} + \frac{g_0(\epsilon_2)}{\beta-1} \\ 
    &= \frac{\beta}{\beta-1} \log \sum_{\theta_1^n, \hat{\theta}_1^n} P(\theta_1^n, \hat{\theta}_1^n) 2^{\rndBrk{1-\frac{1}{\beta}}\tilde{H}^{\downarrow}_{\frac{1}{\beta}}(\bar{Y}_1^n | \theta_1^n, \hat{\theta}_1^n)} + \frac{\beta}{\beta-1}\log\frac{1}{\Pr_{\bar{\rho}}(\Omega \wedge \Upsilon'')} + \frac{g_0(\epsilon_2)}{\beta-1}
    \numberthis
\end{align*} 
where the first line follows from the data processing inequality for the smooth max-entropy, second line follows from \cite[Lemma B.10]{Dupuis20}, third line using \cite[Lemma B.6]{Dupuis20}. Let the random variable $Z$ denote the number of $i \in [n]$, such that $\Theta_i = \hat{\Theta}_i =1$. Then, we have the following inequalities for the first term in the bound above
\begin{align*}
    \frac{\beta}{\beta-1} \log \sum_{\theta_1^n, \hat{\theta}_1^n} P(\theta_1^n, \hat{\theta}_1^n) 2^{\rndBrk{1-\frac{1}{\beta}}\tilde{H}^{\downarrow}_{\frac{1}{\beta}}(\bar{A}_1^n | \theta_1^n, \hat{\theta}_1^n)} &\leq \frac{\beta}{\beta-1} \log \sum_{\theta_1^n, \hat{\theta}_1^n} P(\theta_1^n, \hat{\theta}_1^n) 2^{\rndBrk{1-\frac{1}{\beta}}Z_{\theta_1^n, \hat{\theta}_1^n}}\\
    &= \frac{\beta}{\beta-1} \log \sum_{z=0}^n \binom{n}{z} \mu^{2z} (1-\mu^2)^{n-z} 2^{\rndBrk{1-\frac{1}{\beta}}z} \\
    &= n \frac{\beta}{\beta-1}\log\rndBrk{1- \mu^2 + 2^{\rndBrk{1-\frac{1}{\beta}}}\mu^2} \\
    &\leq n\mu^2 \frac{\beta}{(\beta-1)\ln(2)}\rndBrk{2^{\rndBrk{1-\frac{1}{\beta}}}-1} \\
    &\leq  n\mu^2 \rndBrk{1+\frac{(\beta-1)\ln(2)}{\beta}}
    \numberthis
\end{align*}
where we use $Z_{\theta_1^n, \hat{\theta}_1^n}$ to denote the fact that the value of random variable $Z$ is fixed by $\theta_1^n$ and $\hat{\theta}_1^n$, in the second line we transform the expectation over $\theta_1^n$ and $\hat{\theta}_1^n$ into an expectation over $Z$, in the third line we use the binomial theorem, in the fourth line we use the fact that $\ln(1+x) \leq x$ for all $x >-1$, and in the last line we use the fact that $e^x \leq 1 + x + x^2$ for $x\in (0,1)$ and that for $\beta >1$ the term $\ln(2)\rndBrk{1-\frac{1}{\beta}}$ lies in this range. Thus, we get that for $\beta \in (1,2)$, 
\begin{align*}
    H^{\epsilon_2}_{\max}&(\bar{Y}_1^n | \bar{X}_1^n E \Theta_1^n \hat{\Theta}_1^n T)_{\Phi_{\text{QKD}} (\bar{\rho})_{|\Omega \wedge \Upsilon''}} \leq n\mu^2+\frac{(\beta-1)\ln(2)}{\beta} n\mu^2 + \frac{\beta}{\beta-1}\log\frac{1}{\Pr_{\bar{\rho}}(\Omega \wedge \Upsilon'')} + \frac{g_0(\epsilon_2)}{\beta-1}.
    \numberthis
\end{align*}
Choosing $\beta = 1+ \frac{1}{\sqrt{n}}$ and using the coarse bounds $1 < \beta <2$, gives us
\begin{align}
    H^{\epsilon_2}_{\max}&(\bar{Y}_1^n | \bar{X}_1^n E \Theta_1^n \hat{\Theta}_1^n T)_{\Phi_{\text{QKD}} (\bar{\rho})_{|\Omega \wedge \Upsilon''}} \leq n\mu^2+ \sqrt{n} \rndBrk{\mu^2 \ln(2) + 2 \log\frac{1}{\Pr_{\bar{\rho}}(\Omega \wedge \Upsilon'')} +{g_0(\epsilon_2)}}.
    \label{eq:Hmax_barA_bd}
\end{align}
Combining Eq. \ref{eq:src_corr_Hmin_chain_rule_step}, \ref{eq:Hmin_AbarA_bd}, and \ref{eq:Hmax_barA_bd}, we get
\begin{align*}
    H&^{\epsilon_{\text{pa}} + \epsilon_1 + 2(\epsilon_2 + \epsilon_3)}_{\min} (X_S | E \Theta_1^n \hat{\Theta}_1^n T)_{\Phi_{\text{QKD}} (\bar{\rho})_{|\Omega \wedge \Upsilon''}} \\
    &\geq n(1- 2 \mu  - h(e) - V \sqrt{2h(\epsilon+\delta)}) - \sqrt{n} \rndBrk{\mu^2 \ln(2) + 2 \log\frac{1}{\Pr_{\bar{\rho}}(\Omega \wedge \Upsilon'')} +{g_0(\epsilon_2)}}\\
    &- \frac{V}{\sqrt{2h(\epsilon+\delta)}} \rndBrk{\log \frac{1}{\Pr_{\bar{\rho}}(\Omega \wedge \Upsilon'') - 2\epsilon^{\delta}_{\text{qu}}} + 1} - \frac{g_1(\epsilon_1, \epsilon_{\text{pa}})}{2\sqrt{2h(\epsilon+\delta)}} V   - \log|T| - 3 g_0(\epsilon_3)
    \numberthis
    \label{eq:final_Hmin_bd}
\end{align*}
where the parameters $\epsilon_1, \epsilon_2, \epsilon_3 >0$ are arbitrary, and 
\begin{align*}
    \epsilon_{\text{pa}} = 2\rndBrk{\frac{2\epsilon^{\delta}_{\text{qu}}}{\Pr_{\bar{\rho}}(\Omega \wedge \Upsilon'')}}^{1/2}.
    \numberthis
\end{align*} 

For an arbitrary $\epsilon' >0$, we can set $\epsilon_1 = \frac{\epsilon'}{2}$ and $\epsilon_2=\epsilon_3 = \frac{\epsilon'}{8}$ to derive the result in the theorem.

\bibliographystyle{halpha}
\bibliography{bib}

\end{document}